\definecolor{darkgreen}{rgb}{0,0.6,0}
\theoremstyle{plain}
\newtheorem{theorem}{Theorem}[chapter]
\newtheorem{lemma}[theorem]{Lemma}
\newtheorem{proposition}[theorem]{Proposition}
\newtheorem*{BT}{Barrington's Theorem}
\theoremstyle{definition}
\newtheorem{definition}[theorem]{Definition}
\newtheorem{corollary}[theorem]{Corollary}
\newtheorem*{corollary*}{Corollary}
\newcommand\complexity@possiblymakesmaller[1]{#1}
\newcommand\complexity@fontcommand{\mathbf}
\newcommand{\ComplexityFont}[1]{%
{\ensuremath{\complexity@possiblymakesmaller{\complexity@fontcommand{#1}}}}
}
\newcommand{\class}[1]{\ComplexityFont{#1}}
\newcommand{\NCone}{\class{NC^1}}
\newcommand{\set}[1]{\left \{#1\right \}}
\newcommand{\Set}[2]{\left\{ #1 :\, #2\right\}}
\renewcommand{\C}{\mathbb{C}}
\newcommand{\N}{\mathbb{N}}
\newcommand{\ket}[1]{| #1 \rangle}
\newcommand{\bra}[1]{\langle #1 |}
\newcommand{\braket}[2]{\langle #1|#2 \rangle}
\newcommand{\id}{\mathbb{I}}
\newcommand{\eps}{\varepsilon}
\newcommand{\PC}{{\sigma}}  
\newcommand{\gh}{\mathit{GH}}    
\newcommand{\PV}{{\sf PV}}
\newcommand{\PVqubit}{\PV_{\text{\rm\tiny \!qubit}}}
\newcommand{\PVmub}{\PV_{\text{\rm\tiny \!MUB}}}
\newcommand{\accept}{\sf ACCEPT}
\newcommand{\reject}{\sf REJECT}
\newenvironment{protocol}{\vspace{-1em}\begin{framed}
}{
\vspace{-3ex}\end{framed}\vspace{-1em}}
\begin{document}

\title{Position-Based Quantum Cryptography and the Garden-Hose Game}
\author{Florian Speelman}
\date{August 2011}


\begin{titlepage}
\begin{center}
{\LARGE Position-Based Quantum Cryptography and the Garden-Hose Game \\}
\vspace{60pt}
\textsc{\large Master's Thesis}\\
\vspace{80pt}
{\LARGE Florian Speelman\\}

\vspace{30pt}
{\Large \emph{Supervisors:}\\
Prof.dr.~Harry Buhrman\\
Dr.~Christian Schaffner\\
}
\vfill
\includegraphics[height=0.65in]{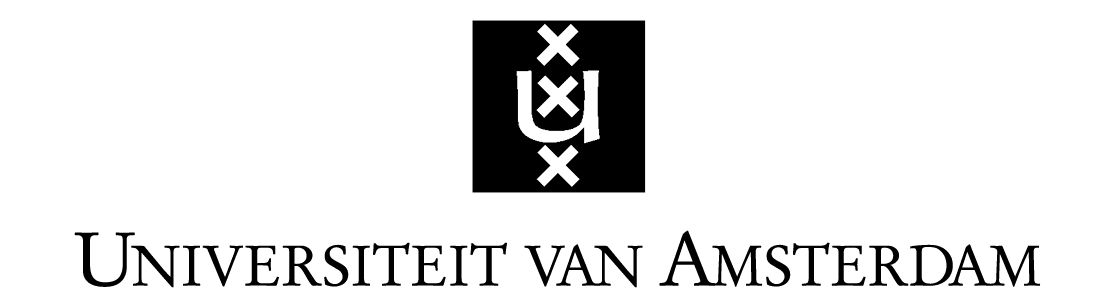}\includegraphics[height=0.95in]{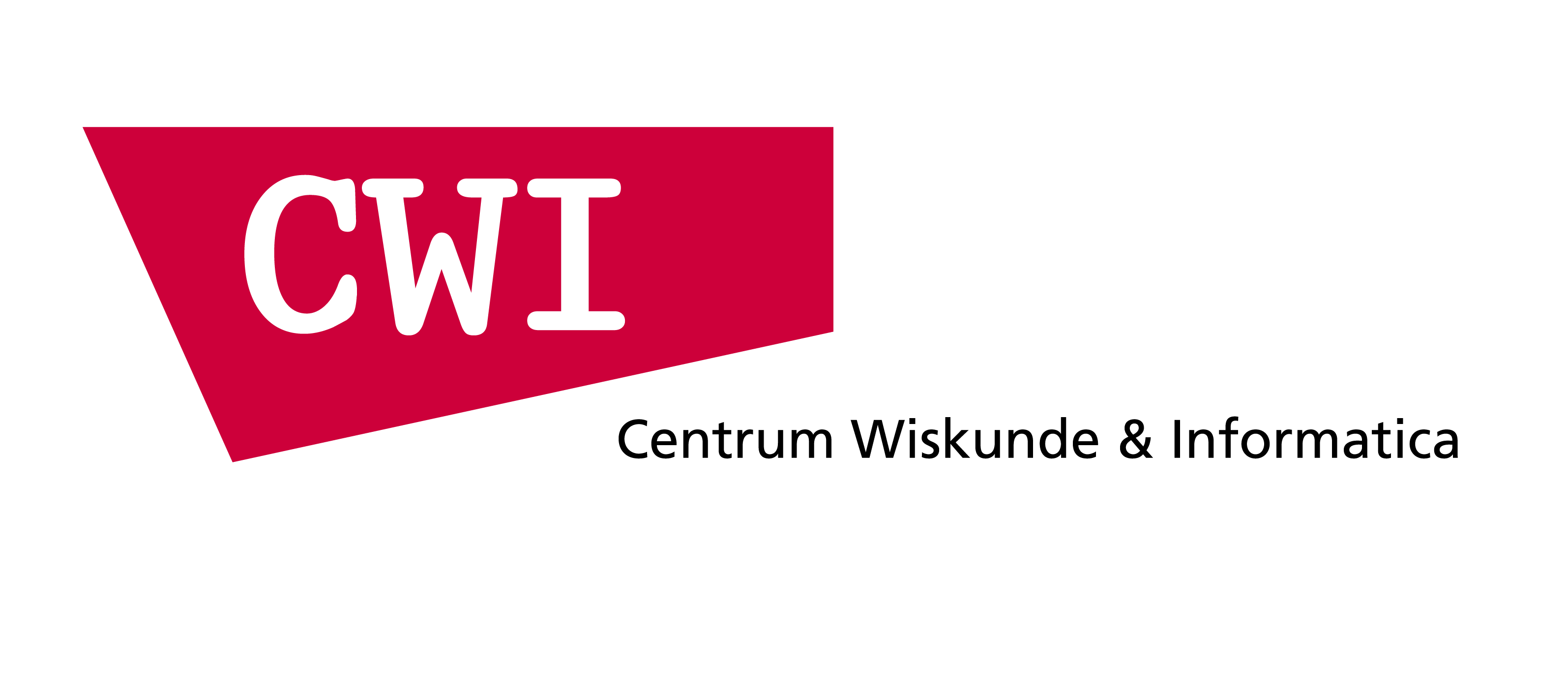}\\
\vspace{2cm}
August 2011
\end{center}

\end{titlepage}

\begin{abstract}
  We study position-based cryptography in the quantum setting. We
  examine a class of protocols that only require the communication of
  a single qubit and $2n$ bits of classical information. To this end,
  we define a new model of communication complexity, the garden-hose
  model, which enables us to prove upper bounds on the number of EPR
  pairs needed to attack such schemes. This model furthermore opens up
  a way to link the security of  position-based quantum cryptography to
  traditional complexity theory.
\end{abstract}

\tableofcontents{}

\chapter{Introduction}


%

\section{Quantum Computing and Quantum Information}
In a now-classic talk, Richard Feynman considered the possibilities of simulating quantum physics
on computers~\cite{Feynman1982}. While it seems to be hard to come up with a way of simulating
general quantum systems, he conjectured that it might be possible to let quantum systems perform
the simulation, a so-called universal quantum simulator. The idea that quantum systems might
be fundamentally better at some computational tasks than computers based on classical physics was a
starting point for the field of \emph{quantum computation}.

The popularity of quantum computation got a big boost after Peter Shor's 1994 discovery of
an efficient algorithm for factoring integers~\cite{Shor1999}.
Besides it being interesting that a quantum algorithm was found that performs much better than the
best known classical algorithm for a well-studied problem, this discovery also has implications for
cryptography.
The security of many cryptographic systems, with RSA being the most widely used,
is based on the assumption that multiplying two large primes is much easier
than factoring the result. If scalable quantum computers would
become available, much of the current encrypted data-traffic could be broken.

Despite the effort going into it, it might still take
a long time for a large-scale quantum computer to be build. Practical quantum computation
might even turn out to be impossible, because of reasons that are not yet known. Still,
currently it seems as though quantum computation is consistent with quantum mechanics, and
a fundamental impossibility result would be a very interesting development.
Applying the tools of quantum information has also led to new results in classical theoretical
computer science, see for example~\cite{DdW11}.

Bennett and Brassard~\cite{Bennett1984} developed a system for quantum key distribution.
The BB84 protocol does not depend on the manipulation of large quantum systems, and implementations
are currently available commercially. The security of the scheme is proven, but recently researchers have demonstrated attacks on implementations of this protocol, using technical properties of the used detectors.
Investigating possible attacks is a currently active area of research on the experimental side of quantum cryptography.

Classical information is made up of bits, while a unit of quantum information is called a \emph{qubit}.
Instead of the classical bit that can take values 0 and 1, a single qubit can be described as a two-dimensional complex unit
vector. We say that a qubit can be in a \emph{superposition} of its basis states $\ket{0}$ and $\ket{1}$, named
in analogy with 0 and 1 of the classical bit. An important characteristic of multiple-qubit states is the ability of
qubits to be \emph{entangled}. A quantum state is entangled when the system of multiple qubits can not be separated into individual qubits without loss of information.
The \emph{EPR pair} or \emph{Bell state} is the maximally entangled state on two qubits,
used in many of the well-known results of quantum information such as teleportation and super-dense coding.

\section{Position-based quantum cryptography}

The goal of \emph{position-based cryptography} is
to perform cryptographic tasks using location as a credential.
The general concept of position-based cryptography
was introduced by Chandran, Goyal, Moriarti and Ostrovsky~\cite{Chandran2009}.
One possible example would be a scheme that encrypts a message 
in such a way that this message can only be read at a certain location, like
a military base. Position authentication is another example of a position-based cryptographic task;
there are many thinkable scenarios in which it would be very useful to
be assured that the sender of a message is indeed at the claimed location.

One of the basic tasks of position-based cryptography is \emph{position
verification}. We have a \emph{prover} $P$ trying to convince
a set of \emph{verifiers} $V_{0},\ldots,V_{k}$, spread around in
space, that $P$ is present at a specific position \emph{pos}. The
first idea for such a protocol is a technique called distance bounding~\cite{BC93}.
Each verifier sends a random string to the prover, using
radio or light signals, and measures how long it takes for the prover
to respond with this string.
Because the signal cannot travel faster than the speed
of light, each verifier can upper bound the distance from the prover.

Before the recent formulation of a general framework for position-based cryptography,
this problem of secure positioning has been studied in the
field of wireless security, and there have been several proposals
for this task (\cite{BC93,SSW,VN04,B04,CH05,SP05,ZLFW06,CCS06}).

Although the security of the proposed protocols can be proven against a single attacker,
they can be broken by \emph{multiple
colluding adversaries}. Multiple adversaries working together can
send a copy of the string sent by the nearest verifier to all other partners
in crime. Each adversary can then emulate the actions of the honest
prover to its closest verifier. It was shown by Chandran et~al.~\cite{Chandran2009}
that such an attack is always possible in the classical world, when not making any extra
assumptions. Their paper does give a scheme where secure position
verification can be achieved, when restricting the adversaries by
assuming there is an upper limit to the amount of information they
can intercept: the Bounded Retrieval Model. Assuming bounded retrieval might
not be realistic in every setting, so the next question was whether other
extensions might be possible to achieve better security. Attention turned
to the idea of using quantum information instead of classical information.
Because the general classical attack depends on the ability of the adversaries to
simultaneously keep information and send it to all other adversaries, researchers
hoped that the impossibility of copying quantum information might make an attack impossible. 
(See Section~\ref{sec:nocloning} for the quantum no-cloning theorem.)

The first schemes for position-based quantum cryptography were investigated
by Kent in 2002 under the name of \emph{quantum tagging}. Together
with Munro, Spiller and Beausoleil, a U.S.~patent was granted for
this protocol in 2006. Their results have appeared in the scientific
literature only in 2010~\cite{Kent2010}. This paper considered several
different schemes, and also showed attacks on these schemes. Afterwards,
multiple other schemes have been proposed, but all eventually turned out to be 
susceptible to attacks.

Finally a general impossibility result was given by Buhrman, Chandran, Fehr, Gelles,
Goyal, Ostrovsky, and Schaffner~\cite{Buhrman2010},
showing that every quantum protocol can be broken. The construction in this
general impossibility result uses a doubly exponential amount of entanglement.
A new construction was recently given by Beigi and K\"onig reducing
this to an exponential amount~\cite{Beigi2011}.

Even though it has been shown that any scheme for position-based quantum cryptography
can be broken, these general attacks use
an amount of entanglement that is too large for use in practical settings.
Even when the honest provers use a small state,
the dishonest players need an astronomical amount of EPR pairs to perform
the attack described in the impossibility proofs. This brings us to the following
question, which is also the central topic of this thesis: How much entanglement
is needed to break specific schemes for quantum position verification?

Answers to this question will take the form of lower bounds and upper bounds.
The upper bounds typically consist of a specific attack on a scheme, showing how it
can be broken using a certain amount of entanglement.
A lower bound is generally harder to prove, since it has to say something about
the amount of resources needed for \emph{any} attack. In their recent paper,
Beigi and K\"onig give a scheme, using Mutually Unbiased Bases, that needs at least a linear number of
EPR pairs to break for honest players~\cite{Beigi2011}. This thesis adds a new upper
bound for this scheme, showing that the lower bound is tight up to a small constant.

In this thesis, we investigate a class of schemes that involves only a single qubit, and $2n$ classical bits.
Such schemes were first considered by Kent et~al.~\cite{Kent2010}.
We focus on the one-dimensional set-up,
but the schemes easily generalize to three-dimensional space. Besides the assumption
that all communication happens at the speed of light, we assume that all parties do
not need time to process the verifiers' messages and can perform computations instantaneously.
We also assume that the verifiers have clocks that are synchronized and accurate, and
that the verifiers have a private channel over which they can coordinate their actions.

The prover wants to
convince the two verifiers, $V_0$ and $V_1$, that he is at position
$pos$ on the line in between them. $V_0$ sends a qubit $\ket{\phi}$
prepared in a random basis to $P$. In addition,
$V_0$ sends a string $x \in \{0,1\}^n$ and $V_1$ a string $y \in \{0,1\}^n$
to $P$. The verifiers $V_0$ and $V_1$ time their actions such 
that the messages arrive at the location of the honest prover at the same time.
After receiving $\ket{\phi}, x$ and $y$, $P$ computes a
predetermined Boolean function $f(x,y)$. He sends $\ket{\phi}$ to
$V_0$ if $f(x,y)=0$ and to $V_1$ otherwise. $V_0$ and $V_1$ check that
they receive the correct qubit in time corresponding to $pos$ and
measure the received qubit in the basis corresponding to which it was
prepared. In order to cheat the scheme, we imagine two provers $P_0$
and $P_1$ on either side of the claimed position $pos$, who try to
simulate the correct behavior of an honest $P$ at $pos$.

Looking from the perspective of the adversaries, we can describe their task in the following way.
$P_0$ receives $\ket{\phi},x$ and $P_1$ receives $y$.
They are allowed to simultaneously
send a single message to each other such that upon receiving
that message they both know $f(x,y)$ and if $f(x,y)=0$ then $P_0$
still has $\ket{\phi}$, otherwise $P_1$ has it in his possession.
The attack described in~\cite{Kent2010} accomplishes this task, for any function~$f$,
but requires an amount of entanglement that is exponential in~$n$.
In this thesis we introduce a complexity measure which relates
to the complexity of computing $f(x,y)$, the garden-hose complexity.
The garden-hose complexity gives an upper bound on the number of EPR pairs the adversaries
need to break the one-qubit scheme that corresponds to the function~$f$.

These protocols are interesting to consider, because the quantum actions of the honest prover
are very simple. All the honest prover has to do is route a qubit to the correct location,
while the verifiers have to measure in the correct basis, actions which are not much harder than
those needed in the BB84 protocol, which is already technologically feasible.
If a gap can be shown between the difficulty of the actions of the honest prover and those of the adversaries,
this protocol would be a good candidate to investigate further for use in real-life settings.
The hope is that for functions $f(x,y)$ that are ``complicated enough'', the amount of entanglement
needed to successfully break the protocol grows at least linearly in the bit length $n$ of the
classical strings $x,y$; we would then require more \emph{classical} computing power of the honest
prover, whereas more \emph{quantum} resources are required by the adversary to break the protocol.
To the best of our knowledge, such a trade-off has never been observed for a quantum-cryptographic protocol.

\section{Complexity Theory}
The field of \emph{computational complexity theory} is concerned with the study of how much
computational resources are required to solve problems.
These resources always have to be defined in relation to a model of computation.

One of the most well-studied models of computation is the \emph{Turing machine},
a hypothetical device that manipulates symbols on a scratch pad according to a set of rules.
To be more precise, the memory of a Turing machine consists of an
\emph{input tape}, one or more \emph{work tapes}, and an \emph{output tape}.
Every tape has a \emph{tape head} that can read and write symbols, one cell at a time. Every time step, the
tape head can move one cell to the left or to the right.
A Turing machine has a finite set of states, and the \emph{state register} of the machine always contains one
of these states. To determine the actions of the Turing machine, the machine has a \emph{transition function}
that, given the current state and the symbols under the tape heads, determines the next state in the state register,
if and what to write under the heads, and the movement of the heads.

The Church-Turing thesis states that if a problem can be solved by an algorithm,
there exists a Turing machine that solves the problem.
When looking at the hardness of a problem, we are concerned with how \emph{efficiently} we can solve it;
we bound resources (such as space used or time used) and then examine which problems can still be solved then.
Many variants of the described (\emph{deterministic})
Turing machine has been studied, such probabilistic Turing machines, non-deterministic Turing machines, and others.
Even though these models can still solve the same problems in principle, they might differ in which problems
they can solve under bounded resources.
The Turing-machine model is surprisingly robust under small modifications; for example,
increasing the number of tapes, or letting the tape head movements only depend on the input
length instead of the actual input, does not change the most commonly used complexity classes.

The first complexity class that we will consider is $\class{P}$.
The class $\class{P}$ contains all problems that can be solved by a deterministic Turing machine
using an amount of time (or steps) that is polynomial in the input length.
This class is often used informally as a notion of \emph{efficient computation}; when a problem is
in $\class{P}$, we will often call it easy (for a conventional computer).

Besides bounding time, it is also possible to bound the amount of tape the Turing machine is allowed to
write on when solving a problem. The complexity class $\class{L}$ contains all problems that can
be solved by a deterministic Turing machine using a logarithmic amount of space.
Complexity theory looks at the relation between complexity classes. For example, we can
say that $\class{L} \subseteq \class{P}$, meaning that all problems that can be solved
in logarithmic space can also be solved in polynomial time. As is often the case
in complexity theory, we are unable to prove that this inclusion is proper, meaning that we can not
prove that the classes are not equal. Still, it is widely believed that there
are problems in $\class{P}$ that are not in $\class{L}$.

An alternative model of computation to the Turing machine is given by the \emph{Boolean circuit}.
A circuit is a directed acyclic graph, where the \emph{input nodes} (which are vertices with
no incoming edges) are given by input bits. The non-source vertices are called \emph{gates},
and represent the logical operations OR, AND, and NOT. Complexity measures
that can be defined on circuits include circuit depth and circuit size. The \emph{depth} of
a circuit is the length of the longest path of an input node to the output node.
The \emph{size} of a circuit is defined as the number of gates of the circuit.

\section{Contributions of This Thesis}

The main theme of this thesis is the analysis of specific protocols
for position-based quantum cryptography.
Large parts of the thesis are based on the article
\emph{The Garden-Hose Game and Application to Position-Based Quantum Cryptography},
by Harry~Buhrman, Serge~Fehr, Christian~Schaffner, and Florian~Speelman, which will be presented at QCRYPT 2011~\cite{BFSS11}.

In Chapter~\ref{sec:attack} we demonstrate that the protocol proposed
by Beigi and K\"onig~\cite{Beigi2011} can be broken using a number of
EPR pairs that is linear in the number of qubits that the honest player
has to manipulate. This improves on their exponential upper bound. The
new upper bound matches their linear lower bound, thereby showing the lower bound
is optimal up to a constant factor.

For the rest of the thesis we turn our attention to protocols for
position verification using only one qubit, complemented with classical
information. In Chapter~\ref{sec:gardenhose} we define the notion
of \emph{garden-hose complexity}, capturing the power of a group of
attacks on these protocols. If a function has garden-hose complexity~$s$,
there is a scheme to break the protocol using $s$ EPR pairs.
We give upper bounds on the garden-hose complexity for several concrete
functions, such as equality, inner product, and majority. Polynomial
upper bounds are also shown for any function that can be computed
in logarithmic space.

Considering the limits of the power of this model, we give an almost-linear
lower bound for the garden-hose complexity of a large group of functions,
including equality and bitwise inner product.
We also show there exist functions having exponential garden-hose
complexity. This exponential lower bound gives evidence that a general efficient
quantum attack, should it exist, has to use other ingredients than just teleportation.

For practical protocols it makes sense to primarily consider functions $f$ which
can be computed in polynomial time. As a consequence of our results,
we find the following interesting connections between the security
of the quantum protocol and classical complexity theory. On one hand,
the assumption that $\class{P}$ is not equal to $\class{L}$
will be needed to keep the possibility open that there exist polynomial-time
functions $f$ where a superpolynomial amount of entanglement is needed
to break our scheme. This also implies a connection in the other direction,
if there is an $f$ in $\class{P}$
such that there is no attack on our scheme using a polynomial number
of EPR pairs, then $\class{P}\neq\class{L}$.

Finally, we give a lower bound for the single-qubit protocol in the
quantum case; it is shown that the adversaries need to manipulate
a number of qubits that is at least logarithmic in the length of the
classical input, while the honest player only has to act on a single
qubit.

These results are steps towards gaining a better understanding of position-based quantum cryptography.
It is realistic to assume that the entanglement between the adversaries is bounded.
The garden-hose model highlights obstacles that a general security proof would have to overcome, and
gives new upper bounds for the amount of entanglement needed for an attack on a specific protocol.
The eventual goal is to prove the security of a specific protocol for position-based quantum cryptography,
under realistic assumptions, so that this new type of cryptography can be considered for implementation
in the real world.

\chapter{Preliminaries\label{sec:preliminaries}}
\section{Basic Quantum Information}
For an introduction into quantum information see~\cite{NC00}.
We will mostly use this section to fix notation.
Quantum states can be described as unit vectors in a complex \emph{Hilbert space}. A Hilbert space
is a complex vector space with an inner product. Throughout this thesis we will typically
use \emph{bra-ket notation}, also known as \emph{Dirac notation}. A quantum mechanical state can be described by a vector $\ket{\psi}$,
called a \emph{ket}. The vector dual to $\ket{\psi}$ is written as $\bra{\psi}$ and is called a \emph{bra}.
The inner product between two states $\ket{\psi}$ and $\ket{\phi}$ is written as $\braket{\psi}{\phi}$.

A single qubit is described in a two-dimensional state space. The most common orthonormal basis we use
for qubits is called the \emph{computational basis} and is defined as
$$\ket{0}=\begin{pmatrix}1 \\ 0 \end{pmatrix},\; \ket{1}=\begin{pmatrix}0 \\ 1 \end{pmatrix} \; .$$

We can describe any one-qubit state by a superposition of these basis vectors, enabling us to write
$$\ket{\psi} = \alpha \ket{0} + \beta \ket{1} \quad \text{with} \quad \alpha,\beta \in \C$$
for any one-qubit state $\ket{\psi}$. Here normalization requires that $|\alpha|^2+|\beta|^2=1$.
In vector notation we would write this state $\ket{\psi}$, and its dual $\bra{\psi}$, as
$$
\ket{\psi}=\begin{pmatrix}\alpha \\ \beta \end{pmatrix},\;
\bra{\psi}=\begin{pmatrix}\alpha^* & \beta^* \end{pmatrix}
$$

The joint state of multiple quantum systems is a vector in a space that is a \emph{tensor product} of
the original spaces. A two-qubit system has computational basis states
\begin{align*}
\ket{0}\ket{0} = \begin{pmatrix} 1 \\ 0 \end{pmatrix} \otimes \begin{pmatrix} 1 \\ 0 \end{pmatrix}  &=
\begin{pmatrix} 1 \\ 0 \\ 0 \\ 0 \end{pmatrix} &
\ket{0}\ket{1} = \begin{pmatrix} 1 \\ 0 \end{pmatrix} \otimes \begin{pmatrix} 0 \\ 1 \end{pmatrix}  &=
\begin{pmatrix} 0 \\ 1 \\ 0 \\ 0 \end{pmatrix} \\ 
\ket{1}\ket{0} = \begin{pmatrix} 0 \\ 1 \end{pmatrix} \otimes \begin{pmatrix} 1 \\ 0 \end{pmatrix}  &=
\begin{pmatrix} 0 \\ 0 \\ 1 \\ 0 \end{pmatrix} &
\ket{1}\ket{1} = \begin{pmatrix} 0 \\ 1 \end{pmatrix} \otimes \begin{pmatrix} 0 \\ 1 \end{pmatrix}  &=
\begin{pmatrix} 0 \\ 0 \\ 0 \\ 1 \end{pmatrix} \; .\\
\end{align*}
The state space of $n$ qubits has dimension $2^n$.
Not all two-qubit states can be written as the tensor product of two one-qubit states. A quantum
state that cannot be written as a product of individual qubit states is said to be \emph{entangled}.
A very important example of an entangled state is the EPR pair, which will be introduced in the next section.

The evolution of a closed quantum system is described by a \emph{unitary transformation}. This means
that we can describe manipulation of the quantum states as a unitary matrix; a matrix
for which holds $U^\dagger U = \id$.

The \emph{Pauli matrices} are four unitary matrices that are
very common in quantum computing. Here we define them as
\begin{align*}
\sigma_0 := \id &:= \begin{pmatrix} 1 & 0 \\ 0 & 1 \end{pmatrix} &
\sigma_1 := X &:= \begin{pmatrix} 0 & 1 \\ 1 & 0 \end{pmatrix} \\
\sigma_2 := Y &:= \begin{pmatrix} 0 & -i \\ i & 0 \end{pmatrix} &
\sigma_3 := Z &:= \begin{pmatrix} 1 & 0 \\ 0 & -1 \end{pmatrix} \; \text{.}
\end{align*}

The \emph{Hadamard matrix} is a unitary transformation which is defined as
\begin{equation*}
H := \frac{1}{\sqrt{2}}\begin{pmatrix} 1 & 1 \\ 1 & -1 \end{pmatrix} \text{.}
\end{equation*}
We write 
\begin{align*}
\ket{+} &:= H\ket{0} = \frac{1}{\sqrt{2}} (\ket{0} + \ket{1}) \quad \text{and}\\
\ket{-} &:= H\ket{1} = \frac{1}{\sqrt{2}} (\ket{0} - \ket{1})
\end{align*}
for the basis vectors of the \emph{Hadamard basis}.

\subsection{Measurement of Quantum States}
A \emph{quantum measurement} is described by a collection $\set{M_m}$ of measurement
operators, where $m$ refers to the measurement outcome. If the state before measurement
is $\ket{\psi}$, the probability that result $m$ occurs is given by
\[
p(m)=\bra{\psi} M^\dagger_m M_m \ket{\psi} \; \text{,}
\]
and the state after getting measurement outcome $m$ is 
\[
\frac{M_m \ket{\psi}}{\sqrt{\bra{\psi} M^\dagger_m M_m \ket{\psi}}} \; .
\]
Reflecting the fact that probabilities sum to one, we have the \emph{completeness relation}
\[
\sum_m{M^\dagger_m M_m} = \id \; .
\]

Measurements in the computational basis can be described by measurement operators
\begin{align*}
M_0 &= \ket{0}\bra{0} & M_1 &= \ket{1}\bra{1} \; ,
\end{align*}
while measurements in the Hadamard basis have measurement operators
\begin{align*}
M_0 &= \ket{+}\bra{+} & M_1 &= \ket{-}\bra{-} \; .
\end{align*}
All measurement operators we use in this thesis are projective measurements.
We call a measurement a \emph{projective measurement} if, beside satisfying the completeness
relation, the $M_m$ are orthogonal projectors. The last requirement means that the operators have to be
Hermitian, that is $M^\dagger_m=M_m$, and that $M_m M_m' = \delta_{m,m'} M_m$. Here $\delta_{m,m'}$ is the
Kronecker delta. As a consequence of using projective measurements,
measuring the same qubit twice, consecutively, will give the same outcome both times.

\subsection{Entanglement and EPR Pairs}
The four different \emph{EPR pairs}, for Einstein, Podolsky and Rosen, or \emph{Bell states} are defined as
\begin{align*}
\ket{\beta_{00}} &:= \frac{1}{\sqrt{2}} \big( \ket{00} + \ket{11} \big) \\
\ket{\beta_{01}} &:= \frac{1}{\sqrt{2}} \big( \ket{01} + \ket{10} \big) \\
\ket{\beta_{10}} &:= \frac{1}{\sqrt{2}} \big( \ket{00} - \ket{11} \big) \\
\ket{\beta_{11}} &:= \frac{1}{\sqrt{2}} \big( \ket{01} - \ket{10} \big) \; \text{.}
\end{align*}
These states will be very often used in this thesis, especially for their use
in \emph{quantum teleportation} (see Section~\ref{sec:teleportation}). Whenever
we will use $\ket{\beta}$, we refer to the state $\ket{\beta_{00}}$.

A measurement on the state $\ket{\beta_{00}}_{AB}$ has the following property:
if qubit $A$ is measured in the computational basis, then
a uniformly random bit $x \in \set{0,1}$ is observed and qubit $B$
collapses to $\ket{x}$. Similarly, if qubit $A$ is measured in the
Hadamard basis, then a uniformly random bit $x \in \set{0,1}$ is
observed and qubit $B$ collapses to $H\ket{x}$.

Using these states, we can define the \emph{Bell measurement}, which projects two qubits into a Bell state,
by measurement operators
\begin{align*}
M_{00} &= \ket{\beta_{00}}\bra{\beta_{00}} & M_{01} &= \ket{\beta_{01}}\bra{\beta_{01}} \\
M_{10} &= \ket{\beta_{10}}\bra{\beta_{10}} & M_{11} &= \ket{\beta_{11}}\bra{\beta_{11}} \; .
\end{align*}

\subsection{The No-Cloning Theorem}
\label{sec:nocloning}
The \emph{no-cloning theorem} is a classic result of quantum information
which states that it is impossible to copy an arbitrary quantum state.
This theorem has very important consequences for quantum cryptography.
Without the impossibility of cloning the BB84 scheme would
be insecure, for example. The no-cloning theorem is also
the reason why the classical attack on schemes for position-based cryptography
does not generalize to the quantum case.

\begin{theorem}
There exists no unitary operation $U$ that perfectly copies the state of an arbitrary qubit.
\end{theorem}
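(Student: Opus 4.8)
The plan is to argue by contradiction, using the linearity of quantum evolution. Suppose there were a unitary $U$ (acting on the qubit together with some fixed ancilla register) that copies an arbitrary qubit, so that for every state $\ket{\psi}$ we have $U(\ket{\psi} \otimes \ket{0}) = \ket{\psi} \otimes \ket{\psi}$, where $\ket{0}$ denotes the fixed initial state of the ancilla. First I would apply this hypothesized identity to the two computational-basis states $\ket{0}$ and $\ket{1}$, obtaining $U(\ket{0}\otimes\ket{0}) = \ket{0}\otimes\ket{0}$ and $U(\ket{1}\otimes\ket{0}) = \ket{1}\otimes\ket{1}$.

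Next I would feed in a genuine superposition, say $\ket{+} = \tfrac{1}{\sqrt2}(\ket{0}+\ket{1})$. On one hand, the copying property demands
\[
U\big(\ket{+}\otimes\ket{0}\big) \;=\; \ket{+}\otimes\ket{+} \;=\; \tfrac12\big(\ket{00}+\ket{01}+\ket{10}+\ket{11}\big).
\]
On the other hand, linearity of $U$ together with the two basis identities forces
\[
U\big(\ket{+}\otimes\ket{0}\big) \;=\; \tfrac{1}{\sqrt2}\Big(U(\ket{0}\otimes\ket{0}) + U(\ket{1}\otimes\ket{0})\Big) \;=\; \tfrac{1}{\sqrt2}\big(\ket{00}+\ket{11}\big).
\]
These two expressions are different vectors (for instance, one has nonzero amplitude on $\ket{01}$ and the other does not), which is the desired contradiction. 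Hence no such $U$ exists.

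The argument is short and the only real content is the tension between linearity and the nonlinearity of the map $\ket{\psi}\mapsto\ket{\psi}\otimes\ket{\psi}$; there is no serious obstacle. The one point deserving a little care is the bookkeeping of the ancilla: the honest statement of a cloning machine allows a workspace register whose final state may depend on the input, i.e. $U(\ket{\psi}\otimes\ket{0}) = \ket{\psi}\otimes\ket{\psi}\otimes\ket{g_\psi}$ for some garbage $\ket{g_\psi}$. To handle this cleanly I would instead compare inner products: if $U(\ket{\psi}\otimes\ket{0})$ and $U(\ket{\phi}\otimes\ket{0})$ are both of the cloned form, then unitarity gives $\braket{\psi}{\phi} = \braket{\psi}{\phi}^2 \braket{g_\psi}{g_\phi}$, which for two states with $0 < |\braket{\psi}{\phi}| < 1$ is impossible since $|\braket{g_\psi}{g_\phi}| \le 1$ would force $|\braket{\psi}{\phi}| \ge 1$. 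Since the theorem as stated fixes a clean output with no ancilla, the first (linearity) version suffices, and I would present that one, remarking in passing that the inner-product version rules out approximate and ancilla-assisted cloners as well.
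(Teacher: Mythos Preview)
Your proof is correct and follows essentially the same approach as the paper: assume a cloning unitary, apply it to $\ket{0}$ and $\ket{1}$, then use linearity on $\ket{+}$ to derive a contradiction with the required output $\ket{+}\otimes\ket{+}$. Your additional remark on the inner-product argument for ancilla-assisted cloners goes beyond what the paper presents, but the core proof matches.
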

\begin{proof}
By contradiction, suppose we have a unitary operation $U$ that performs a copy, so that
$U\ket{\psi}\ket{s}=U\ket{\psi}\ket{\psi}$ for any possible $\ket{\psi}$, where $\ket{s}$ is some starting
state that is independent of $\ket{\psi}$.
More specifically, this would imply
\[
U(\ket{0}\ket{s}) = \ket{0}\ket{0}
\]
and also
\[
U(\ket{1}\ket{s}) = \ket{1}\ket{1} \text{.}
\]
But now let us try to copy $\ket{+}=\frac{1}{\sqrt{2}}(\ket{0}+\ket{1})$.
Since $U$ is linear, we can use the previous equations to get
\[
U\ket{+}\ket{s} = \frac{1}{\sqrt{2}}(U\ket{0}\ket{s}+U\ket{1}\ket{s})=\frac{1}{\sqrt{2}}(\ket{0}\ket{0}+\ket{1}\ket{1})
\]
and this is not equal to $\ket{+}\ket{+}$, giving a contradiction.
\end{proof}

\subsection{Quantum Teleportation}\label{sec:teleportation}
The goal of quantum teleportation is to transfer a quantum state from
one location to another by only communicating classical information.
Teleportation requires pre-shared entanglement among the two
locations.

Let us say Alice wants to teleport a qubit~$Q$ to Bob,
in an arbitrary unknown state $$\ket{\psi}_Q = \alpha \ket{0}_Q + \beta \ket{1}_Q \; .$$
Alice and Bob share a quantum state $\ket{\beta_{00}}_{AB}$, where Alice has qubit~$A$ and Bob
has~$B$. Define the total state of their system
$\ket{\Psi} = \ket{\beta_{00}}_{AB} \ket{\psi}_Q$.

We can rewrite the state of their quantum system as
\begin{align*}
\ket{\Psi} =&\; \frac{1}{\sqrt{2}} \big( \ket{0}_A\ket{0}_B + \ket{1}_A\ket{1}_B \big) \big( \alpha \ket{0}_Q + \beta \ket{1}_Q \big)\\
=&\; \frac{1}{2} \Big[ \ket{\beta_{00}}_{AQ} \big( \alpha \ket{0}_B + \beta \ket{1}_B \big) +
	\ket{\beta_{01}}_{AQ} \big( \alpha \ket{1}_B + \beta \ket{0}_B \big) 
		\\
&\quad+	\ket{\beta_{10}}_{AQ} \big( \alpha \ket{0}_B - \beta \ket{1}_B \big)
 +\ket{\beta_{11}}_{AQ} \big( \alpha \ket{1}_B - \beta \ket{0}_B \big)
		\Big] \;\text{.}\\
\end{align*}
Note that if we fill in the $\ket{\beta_{z}}$ terms, we get exactly the same state as we started with;
all that happened so far is a re-ordering of terms. Now Alice performs a Bell measurement on qubits $A$ and $Q$,
getting an outcome $z \in \set{00, 01, 10, 11}$. After this measurement, the state Bob holds
will be equal to $\PC_k \ket{\psi}$, where $\PC_k$ is a Pauli correction depending on the outcome $z$.
Now Alice sends the two bits $z$ to Bob.

We can quickly check that when $z=00$, Bob does not have to apply a correction. On $z=01$, Bob can recover $\ket{\psi}$ by
applying $\PC_1=X$. When $z=10$, Bob has to apply $\PC_3=Z$. And when $z=11$, Bob can recover the original
state $\ket{\psi}$ by applying $\PC_2=Y$ to his qubit. The $Y$ operation does contain an extra factor $i$ in its
usual definition, but this only adds a global phase to the quantum state, which we can always ignore.
With this protocol, Alice can effectively transfer
a quantum state to Bob, using a pre-shared entangled state and classical information.

\section{Barrington's Theorem}\label{sec:barringtonstheorem}
A classic result in computational complexity, Barrington's theorem~\cite{Barrington1989}
was a resolution of a long-standing open problem concerning the power of a model called 
\emph{bounded-width computations}. Many theorists believed the model to be not very powerful,
and hoped to prove lower bounds in the model, until Barrington showed it was able to
simulate a large class of circuits. In this thesis, we apply the construction Barrington
used in his original proof. This construction enables us to find strategies
for functions computed by this class of circuits in the garden-hose model (Section \ref{sec:logdepthcircuits}).

A \emph{cycle} is a permutation which maps some subset of elements to each other in a cyclic way.
When we explicitly write out a permutation, we will use \emph{cycle notation}.
In cycle notation, the permutation $\mu=(a_1 \dots a_k)$ has the action
$$a_1 \mapsto a_2 \mapsto \dots \mapsto a_k \mapsto a_1 \text{.}$$
So for each index $i$, we have $\mu(a_i)=a_{i+1}$, where $a_{k+1}$ refers
back to $a_1$. Any other permutation can be written as a product of disjoint cycles.

In the context of this section, let an \emph{instruction over $S_5$} be a triple $(i,\mu,\nu)$,
where $i$ is the index to a bit of the input, and $\mu$ and $\nu$ are permutations
in the symmetric group $S_5$. Let $e$ be the identity permutation and 
$x_1, x_2, \dots, x_n$ be a list of the $n$ inputs to the circuit.
The instruction $(i,\mu,\nu)$ evaluates to $\mu$ if $x_i$ is true and to $\nu$ if
$x_i$ is false. A \emph{width-5 permutation branching program (5-PBP)} is a sequence of instructions over $S_5$,
and it evaluates to the product of the value of its instructions. The length of a program is
the number of instructions. We will say that a permutation program $P$ \emph{computes a circuit
$C$ with output $\mu$} if it evaluates to a cycle $\mu$ whenever $C$ is true, and to the identity $e$ if $C$ is false.

\begin{BT}
Given a boolean circuit $C$ of fan-in two and
depth $d$, with $\mu$ a five-cycle in $S_{5}$, there is a 5-PBP of length at most $4^{d}$ 
that evaluates to $\mu$ if $C$ evaluates to true and to the identity if $C$ evaluates to false.
\end{BT}

\begin{lemma}\label{lemma:bt1}
A program is independent of its output $\mu$. If a 5-PBP
evaluates to $\mu$ if $C$ is true and to $e$ if $C$ is false,
there exists a 5-BPB of equal length that evaluates to a given five-cycle $\nu$ instead of $\mu$.
\end{lemma}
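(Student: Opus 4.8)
The plan is to exploit the fact that $S_5$ acts transitively on its own set of $5$-cycles by conjugation. Concretely, for any two $5$-cycles $\mu$ and $\nu$ there exists a permutation $\tau \in S_5$ with $\tau \mu \tau^{-1} = \nu$; this is standard, since conjugation in the symmetric group preserves cycle type, and the conjugating permutation can be read off directly from the two cycle notations. So first I would fix such a $\tau$ depending only on $\mu$ and $\nu$, not on the input.

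Next, suppose $P = (i_1,\mu_1,\nu_1), (i_2,\mu_2,\nu_2), \ldots, (i_m,\mu_m,\nu_m)$ is a 5-PBP of length $m$ that evaluates to $\mu$ when $C$ is true and to $e$ when $C$ is false. For a given input $x$, let $\pi_j(x) \in \{\mu_j, \nu_j\}$ be the value of the $j$-th instruction, so that $P$ evaluates to the product $\pi_1(x)\pi_2(x)\cdots\pi_m(x)$. I would construct a new program $P'$ by conjugating the permutations in \emph{every} instruction by $\tau$: replace instruction $(i_j,\mu_j,\nu_j)$ with $(i_j, \tau\mu_j\tau^{-1}, \tau\nu_j\tau^{-1})$. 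Since $\tau$ is a fixed element of $S_5$, each new pair is again a pair of permutations in $S_5$, so $P'$ is a legitimate 5-PBP, and it has exactly the same length $m$ and reads exactly the same input bits $i_1,\ldots,i_m$.

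The key computation is that $P'$ evaluated on input $x$ gives
\[
\prod_{j=1}^{m} \tau\,\pi_j(x)\,\tau^{-1} \;=\; \tau\Big(\prod_{j=1}^{m}\pi_j(x)\Big)\tau^{-1},
\]
because the $\tau^{-1}\tau$ factors between consecutive terms cancel (a telescoping conjugation). Hence if $C(x)$ is true, $P$ evaluates to $\mu$ and $P'$ evaluates to $\tau\mu\tau^{-1} = \nu$; if $C(x)$ is false, $P$ evaluates to $e$ and $P'$ evaluates to $\tau e \tau^{-1} = e$. So $P'$ computes $C$ with output $\nu$, as required. This also proves the first sentence of the lemma: the length of $P'$ — indeed its entire ``shape'' (which input bits are queried in which order) — does not depend on the target output, only on the circuit.

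There is essentially no obstacle here; the only point requiring a word of care is the existence of the conjugating $\tau$, which I would justify by the general fact that two permutations in $S_n$ are conjugate iff they have the same cycle type, applied to the special case of two $5$-cycles in $S_5$. One should also note in passing that the construction needs $\mu$ and $\nu$ to be cycles of the same length (here both $5$-cycles), which is exactly the hypothesis; it would fail if one tried to pass, say, from a $5$-cycle to a product of two disjoint transpositions.
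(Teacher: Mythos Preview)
Your proof is correct and rests on the same idea as the paper's: since any two $5$-cycles are conjugate in $S_5$, one can conjugate the program's output by a fixed $\tau$ to swap $\mu$ for $\nu$ while fixing $e$. The only cosmetic difference is that the paper modifies just the first and last instructions (left-multiplying the first by $\tau$ and right-multiplying the last by $\tau^{-1}$), whereas you conjugate every instruction and let the inner $\tau^{-1}\tau$ factors telescope; both produce a program of the same length reading the same input bits, so the distinction is immaterial.
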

\begin{proof}
There exists a permutation $\theta$ such that $\nu=\theta \mu \theta^{-1}$. Multiply both
permutations in the first instruction by $\theta$ on the left, and multiply both
parts of the last instruction by $\theta^{-1}$ on the right. The new program
has the same length as the old and produces $\nu$ if $C$ is true and $e$ is $C$ is false.
\end{proof}

\begin{lemma}\label{lemma:bt2}
If a permutation branching program $P$ computes $C$, there is also a program $P'$ of equal length that
computes the negation of $C$.
\end{lemma}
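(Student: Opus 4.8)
The plan is to change only the final instruction of $P$, leaving all others (and hence the length) untouched. Suppose $P$ computes $C$ with output $\mu$, i.e.\ its instructions $I_1,\dots,I_\ell$ evaluate, in order, to a product equal to the five-cycle $\mu$ whenever $C$ is true and equal to $e$ whenever $C$ is false. Write the last instruction as $I_\ell=(i,\alpha,\beta)$ and replace it by $I_\ell'=(i,\alpha\mu^{-1},\beta\mu^{-1})$; call the resulting program $P'$.

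Since the value of a permutation branching program is the ordered product of the values of its instructions, and we have multiplied both options of the last instruction on the right by the fixed permutation $\mu^{-1}$, the overall product computed by $P'$ is exactly the product computed by $P$, multiplied on the right by $\mu^{-1}$, for every input. Therefore $P'$ evaluates to $\mu\mu^{-1}=e$ when $C$ is true and to $e\,\mu^{-1}=\mu^{-1}$ when $C$ is false. Because the inverse of a five-cycle is again a five-cycle, $\mu^{-1}$ is an admissible output cycle, so $P'$ computes $\neg C$ (with output $\mu^{-1}$), and $P'$ has the same length as $P$. If a specific output five-cycle $\nu$ is desired rather than $\mu^{-1}$, one final appeal to Lemma~\ref{lemma:bt1} converts the output of $P'$ from $\mu^{-1}$ to $\nu$ without changing the length.

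There is essentially no hard step here: the only point requiring a moment's care is to confirm that the definition of ``$P$ computes $C$'' permits \emph{any} five-cycle as the ``true'' output value, so that producing $\mu^{-1}$ is already enough to witness that $P'$ computes $\neg C$; the length bound is immediate because the instruction count is unchanged.
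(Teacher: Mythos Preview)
Your proof is correct and essentially identical to the paper's: you modify only the last instruction by right-multiplying both of its permutations by $\mu^{-1}$, obtain output $e$ on true and $\mu^{-1}$ on false, and then invoke Lemma~\ref{lemma:bt1} to adjust the output cycle. The paper does exactly the same thing, with the same appeal to Lemma~\ref{lemma:bt1} at the end.
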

\begin{proof}
Let $P$ compute $C$ with output $\mu$, and let the last instruction be $(i,\tau,\upsilon)$.
We make $P'$ identical to $P$ except for the last instruction, $(i,\tau \mu^{-1}, \upsilon \mu^{-1})$.
This new program evaluates to $e$ if $C$ is true and to $\mu^{-1}$ if $C$ is false.
By Lemma~\ref{lemma:bt1} we can make it evaluate to any five-cycle.
\end{proof}

\begin{lemma}\label{lemma:bt3}
There are two five-cycles $\mu_1$ and $\mu_2$ in $S_5$ whose commutator is a five-cycle unequal
to the identity. The commutator of two permutations $a$ and $b$ is defined as $aba^{-1}b^{-1}$.
\end{lemma}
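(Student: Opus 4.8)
The plan is to exhibit two explicit five-cycles in $S_5$ and compute their commutator directly, checking that the result is again a five-cycle and not the identity. This is the cleanest route: the statement asks only for existence, so a single well-chosen pair suffices, and no structural argument about $S_5$ is needed.

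First I would pick, say, $\mu_1 = (1\,2\,3\,4\,5)$ and $\mu_2 = (1\,3\,5\,4\,2)$ (or some similar pair — a natural choice is $\mu_2 = (1\,2\,3\,5\,4)$), both of which are manifestly five-cycles. Then I would form the commutator $\mu_1 \mu_2 \mu_1^{-1} \mu_2^{-1}$ by composing the four permutations one element at a time, tracking the image of each of $1,2,3,4,5$. Writing $\mu_1^{-1} = (5\,4\,3\,2\,1)$ and $\mu_2^{-1}$ in cycle notation as well, the computation is a routine finite check: one follows each point through the composition and reads off the resulting permutation in cycle notation.

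Having computed the product, I would verify that it is a single $5$-cycle — equivalently, that it has order $5$, or just that its cycle decomposition consists of one cycle of length $5$ rather than (for instance) a $3$-cycle, a transposition, a product of two disjoint cycles, or the identity. Since the commutator of two even permutations lies in $A_5$, the only cycle types it could possibly have are the identity, a $3$-cycle, a $(2,2)$-cycle, or a $5$-cycle; so once I have checked it fixes no point and is not a $(2,2)$-product, being a $5$-cycle follows. If the first pair happens to give an unwanted cycle type, I simply try another pair — there are plenty, and the space is small enough to search by hand.

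The main obstacle here is essentially nil: there is no conceptual difficulty, only the bookkeeping of composing permutations, and the one thing to be careful about is a consistent left-to-right vs.\ right-to-left convention for composing cycles (matching the convention used for the product of instructions in a $5$-PBP, where the program ``evaluates to the product of the value of its instructions''). I would fix that convention explicitly before computing, so that the five-cycle produced is exactly the one that will later be fed into Barrington's construction via Lemma~\ref{lemma:bt1}.
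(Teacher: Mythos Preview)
Your approach is correct and essentially identical to the paper's: the paper's entire proof is the single line $(12345)(13542)(54321)(24531)=(13254)$, which is exactly your first suggested pair $\mu_1=(1\,2\,3\,4\,5)$, $\mu_2=(1\,3\,5\,4\,2)$ with the commutator computed explicitly. Your added remarks about fixing a composition convention and verifying the cycle type are sensible bookkeeping, but the core idea---exhibit one pair and check---is precisely what the paper does.
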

\begin{proof}$(12345)(13542)(54321)(24531)=(13254)$.
\end{proof}

\begin{proof}[Proof of Barrington's theorem]
The proof closely follows Barrington's original proof~\cite{Barrington1989}.
Let $C$ be a circuit of depth $d$, fan-in two,
consisting of AND and OR gates.
We prove the statement by induction on $d$.
In the base case, if $d=0$, the circuit has no gates, so it is easy to use one instruction that evaluates
correctly to an input (or negation of an input).

Now assume without loss of generality that for depth greater than zero
the output gate of $C$ is an AND gate. If the
output gate is an OR gate we can use Lemma~\ref{lemma:bt2} to turn it into an AND gate
without gaining any length. The inputs to this gate, $C_1$ and $C_2$, are circuits of depth $d-1$.
By the induction hypothesis these circuits can be computed by 5-PBPs of length at most $4^{d-1}$,
call these $P_1$ and $P_2$.
Now let $P_1$ compute $C_1$ with output $\mu_1$ and $P_2$ compute $C_2$ with output $\mu_2$, with $\mu_1$ and $\mu_2$ as
defined in Lemma~\ref{lemma:bt3}. By Lemma~\ref{lemma:bt1} we can choose the
output of the programs without increasing the length.
Let $P'_1$ be equal to $P_1$ but instead with output $\mu^{-1}_1$.
Similarly, make $P'_2$ equal to $P_2$ but with output $\mu^{-1}_2$.
Let the program $P$ be the concatenation $P_1 P_2 P'_1 P'_2$. If either $C_1$ or $C_2$ is false,
$P$ evaluates to the identity permutation. If both are true, $P$ evaluates to the commutator
of $\mu_1$ and $\mu_2$, which is a five-cycle.
(By the first lemma, we can replace this five-cycle by any other.)
$P$ has length at most $4^d$.
\end{proof}

\begin{corollary*}
Every problem in $\NCone$ can be decided by a 5-width permutation branching program of polynomial length.
\end{corollary*}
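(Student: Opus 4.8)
The plan is to combine Barrington's theorem with the standard fact that $\NCone$ is exactly the class of languages decided by uniform families of Boolean circuits of fan-in two, logarithmic depth, and polynomial size. First I would recall this characterization: a language $L$ is in $\NCone$ if and only if there is a family $\{C_n\}$ of such circuits with $C_n$ deciding $L$ on inputs of length $n$ and $\operatorname{depth}(C_n) = O(\log n)$. Without loss of generality the $C_n$ use only AND, OR, and NOT gates, and by pushing negations to the leaves (De Morgan) we may assume all internal gates are AND and OR gates of fan-in two, exactly the form required by Barrington's theorem as stated above.

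Next I would apply Barrington's theorem directly to each $C_n$. Fix a five-cycle $\mu \in S_5$. Barrington's theorem produces, for each $n$, a 5-PBP of length at most $4^{d}$ where $d = \operatorname{depth}(C_n)$, that evaluates to $\mu$ when $C_n$ accepts and to the identity $e$ when $C_n$ rejects. Since $d = O(\log n)$, we have $4^{d} = 2^{2d} = 2^{O(\log n)} = n^{O(1)}$, so the branching program has polynomial length. This is exactly the claim: every problem in $\NCone$ is decided by a family of width-5 permutation branching programs of polynomial length, where "decided" means the program evaluates to a fixed nontrivial five-cycle on yes-instances and to the identity on no-instances (equivalently, one reads off acceptance by checking whether the output permutation is the identity).

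The main obstacle — really the only subtlety — is making sure the input conventions match: Barrington's theorem as stated takes a circuit of fan-in two with AND/OR gates and no negations at internal nodes, whereas the textbook definition of $\NCone$ allows NOT gates freely. I would handle this by the standard normalization argument: replace each NOT gate by propagating complements toward the inputs, which at most doubles the depth (so it remains $O(\log n)$) and the size stays polynomial, and an input literal $x_i$ or $\overline{x_i}$ is handled by a single instruction $(i,\mu,e)$ or $(i,e,\mu)$ respectively in the base case of Barrington's induction (indeed Lemma~\ref{lemma:bt1} already lets us choose any five-cycle as output, and the depth-zero base case in the proof above explicitly allows an input or the negation of an input). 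A secondary point worth a sentence is uniformity: if one wants the branching program family to be uniform, one checks that Barrington's inductive construction is itself computable within the relevant resource bound (logspace) given the circuit, so the reduction preserves uniformity; but for the purposes of this corollary, stated as a non-uniform membership fact, this can be mentioned only in passing.

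Thus the proof is essentially one line once the characterization is invoked — it is a corollary in the literal sense — and I would present it as: take the logarithmic-depth fan-in-two circuit family for the $\NCone$ language, normalize it to AND/OR form, apply Barrington's theorem with $d = O(\log n)$, and observe $4^{O(\log n)} = \operatorname{poly}(n)$.
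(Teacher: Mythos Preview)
Your proposal is correct and follows essentially the same approach as the paper: invoke the logarithmic-depth fan-in-two circuit characterization of $\NCone$, apply Barrington's theorem, and observe $4^{O(\log n)} = n^{O(1)}$. The paper's proof is the one-line version you describe at the end; your additional remarks on normalizing NOT gates and on uniformity are valid refinements but are not treated explicitly in the paper (the former is absorbed into the base case of Barrington's induction, which already allows negated inputs, and the latter is simply not discussed).
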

\begin{proof}
If a problem is in $\NCone$, there is a circuit $C$ with logarithmic depth $d$,
fan-in two, consisting of AND, OR, and NOT gates that decides the problem, by the definition of $\NCone$.
Having $d=O(\log{n})$, we can use Barrington's theorem to 5-BPB of length at most $4^{d}=4^{O(\log{n})}$
that computes $C$.\footnote{All logarithms in this thesis are with respect to base 2.} Since $4^{O(\log{n})}$ is polynomial in $n$, the statement directly follows.
\end{proof}

\chapter{Attacking a protocol for position verification}\label{sec:attack}

\section{Mutually Unbiased Bases}
We use the following standard definition of mutually unbiased bases. The constructions
and notations that we base our attack on were introduced in~\cite{Lawrence2002}.
For a construction that also works for states of dimension other than powers of two,
see~\cite{Bandyopadhyay2002}.

\begin{definition}
Two orthonormal bases $\{\ket{e^a_i}\}_{i=1,\ldots,d}$ and
$\{\ket{e^b_j}\}_{j=1,\ldots,d}$
of
$\mathbb{C}^d$ are called \emph{mutually
  unbiased}, if $|\braket{e^a_i}{e^b_j}|^2 = \frac{1}{d}$ holds for all $i,j \in
\set{1,\ldots,d}$.
\end{definition}


A \emph{Pauli operator} on an $n$-qubit state is the tensor product of
$n$ one-qubit Pauli matrices. Hence, there are $4^n$ Pauli operators in
total. For $i \in \set{0,1,2,3}^n$, we can write the Pauli operator
$O_i$ as 
\[
O_i = \sigma^1_{i_1} \sigma^2_{i_2} \dots \sigma^n_{i_n} = \prod_{k=1}^{n}\sigma_{i_k}^{k}
\]
where $\sigma^k_j$ is the $j$-th Pauli matrix acting on qubit $k$
(tensored with the identity on the other qubits).


Excluding the identity, there are $4^n-1$ Pauli operators. These can
be partitioned in $2^n + 1$ distinct subsets consisting of $2^n - 1$
\emph{commuting} operators each~\cite{Lawrence2002}. The $2^n$ common
eigenvectors of such a set of $2^n-1$ commuting operators define an
orthonormal basis. It can be shown that for any such partitioning, the
resulting $2^n+1$ bases are pairwise mutually
unbiased~\cite{Lawrence2002}. We denote by $\ket{e^a_x}$ the $x$-th
basis vector of the $a$-th mutually unbiased basis of this
construction, where $x \in \set{0,1}^n$ and $a \in \mathcal{A}$ for a
set $\mathcal{A}$ of $2^n+1$ elements.

In the following, we will exploit a special property of this
construction of mutually unbiased bases in order to attack a protocol
for position-verification recently proposed by Beigi and
K\"onig~\cite{Beigi2011}. In particular, we use the fact that applying
a Pauli operator only permutes the basis vectors \emph{within} every
mutually unbiased basis, but does not map any basis vector into
another basis. This property is captured by the following lemma.

\begin{lemma} \label{lem:mubtrick} Let $U$ be an arbitrary Pauli
  operator on $n$ qubits. For arbitrary $a \in \mathcal{A}$ and $x \in
  \set{0,1}^n$, let $\ket{e^a_x}$ be the $x$-th basis vector of the
  $a$-th mutually unbiased basis obtained from the construction
  above. Then, there exists $z \in \set{0,1}^n$ such that $U
  \ket{e^a_x} = \ket{e^a_z}$.
\end{lemma}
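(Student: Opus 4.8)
The key structural fact to exploit is how the bases $\{\ket{e^a_x}\}$ were constructed: the $a$-th basis is the common eigenbasis of a maximal set $\mathcal{S}_a$ of $2^n-1$ mutually commuting Pauli operators. So $\ket{e^a_x}$ is characterized, up to phase, by its tuple of eigenvalues $\big(\lambda_O(x)\big)_{O \in \mathcal{S}_a}$ under the operators $O \in \mathcal{S}_a$, i.e. $O\ket{e^a_x} = \lambda_O(x)\ket{e^a_x}$ with each $\lambda_O(x) \in \{+1,-1\}$. The plan is to show that $U\ket{e^a_x}$ is again a simultaneous eigenvector of every $O \in \mathcal{S}_a$, hence lies in the $a$-th basis, and then to read off the index $z$.

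**Main computation.** The essential observation is that any two Pauli operators either commute or anticommute: $UO = \pm OU$. Given $O \in \mathcal{S}_a$, write $UOU^{-1} = \chi_O \, O$ with $\chi_O \in \{+1,-1\}$ (here I use that $U$ is a Pauli operator, so conjugation sends $O$ to $\pm O$; the sign is $+1$ if $U$ and $O$ commute, $-1$ if they anticommute). Then
\[
O\big(U\ket{e^a_x}\big) = U\big(U^{-1}OU\big)\ket{e^a_x} = U\big(\chi_O O\big)\ket{e^a_x} = \chi_O\lambda_O(x)\, U\ket{e^a_x}.
\]
Thus $U\ket{e^a_x}$ is an eigenvector of each $O \in \mathcal{S}_a$ with eigenvalue $\chi_O\lambda_O(x)$, again in $\{+1,-1\}$. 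Since $\mathcal{S}_a$ is a maximal commuting set, its simultaneous eigenspaces are all one-dimensional, so the joint eigenvalue tuple $\big(\chi_O\lambda_O(x)\big)_{O\in\mathcal{S}_a}$ determines a unique basis vector of the $a$-th basis up to global phase. Call the corresponding index $z \in \{0,1\}^n$; then $U\ket{e^a_x} = \ket{e^a_z}$ (absorbing any phase, exactly as the teleportation discussion ignores global phases).

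**The main obstacle.** The one genuine subtlety is justifying that the joint $\pm1$-eigenvalue tuple indexes basis vectors \emph{consistently} with the labelling $x \mapsto \ket{e^a_x}$ — i.e. that the map from $\{0,1\}^n$ to eigenvalue-tuples used to define the labels is a bijection, so that the new tuple $\big(\chi_O\lambda_O(x)\big)_O$ corresponds to some genuine label $z$. This is exactly the content of the cited construction of~\cite{Lawrence2002}: the $2^n$ common eigenvectors of the $2^n-1$ commuting operators form an orthonormal basis, and by definition $\ket{e^a_x}$ ranges over all of them as $x$ ranges over $\{0,1\}^n$. Since the $2^n$ sign-tuples that actually occur are precisely the $2^n$ eigenspaces (each one-dimensional), and $\big(\chi_O\big)_O$ times an occurring tuple is again an occurring tuple (the eigenvalue tuples that occur form a coset structure compatible with multiplication by the character $\chi$), such a $z$ exists. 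I would keep this part brief, citing the properties of the construction already recalled in the text, rather than re-deriving the symplectic/coset structure of Pauli stabilizers from scratch.
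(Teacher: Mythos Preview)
Your proof is correct and follows essentially the same route as the paper: both arguments use that any two Pauli operators either commute or anticommute, so conjugating each $O \in \mathcal{S}_a$ by $U$ gives back $\pm O$, and hence $U\ket{e^a_x}$ is again a simultaneous eigenvector of all of $\mathcal{S}_a$ with eigenvalues $\pm\lambda_O(x)$. The paper writes this out as $O^A_\ell U\ket{e^a_x} = (-1)^{\alpha(r,\ell)} U O^A_\ell \ket{e^a_x}$ and then invokes the construction to conclude it is some $\ket{e^a_z}$, exactly as you do.

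Your ``main obstacle'' is less of an obstacle than you make it: since the $\{\ket{e^a_x}\}_x$ form a complete orthonormal basis of the $2^n$-dimensional space, the simultaneous eigenspaces are all one-dimensional and exhaust the space, so \emph{any} simultaneous eigenvector of $\mathcal{S}_a$ is automatically (a phase times) some $\ket{e^a_z}$. There is no need to appeal to coset or character structure; the paper disposes of this point in one line, and you can too.
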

\begin{proof}
We can write $U$ as
\[
U=\sigma^1_{r_1} \sigma^2_{r_2} \dots \sigma^n_{r_n} =
\prod_{k=1}^{n}\sigma_{r_k}^{k} \, .
\]
Assume $\ket{e^a_x}$ is a common eigenvector of an internally
commuting subset $A$ of the Pauli operators, like described
earlier. Denote the $2^n-1$ elements of $A$ by $O^A_\ell$ with $\ell \in
\{1,\ldots,2^n-1 \}$.  Note that $\sigma_0 \sigma_i = \sigma_i
\sigma_0$ for $i \in \set{0,1,2,3}$ and $\sigma_i \sigma_j = (-1)^{\delta_{ij}}
\sigma_j \sigma_i$ for $i,j \in \set{1,2,3}$ and $\delta_{ij}$ the
Kronecker $\delta$-function.  Because $\ket{e^a_x}$ is a common
eigenvector of the Pauli operators in this set, it holds for all $\ell$ that $O^A_\ell
\ket{e^a_x} = \lambda_\ell \ket{e^a_x}$ for some eigenvalue $\lambda_\ell$.
To prove the claim, we show that $U \ket{e^a_x}$ is also an
eigenvector of all $O^A_\ell$, with some (possibly different) eigenvalue $\lambda'_\ell$.
\begin{align*}
O^A_\ell U \ket{e^a_x} &= \prod_{k=1}^{n} \sigma_{\ell_k}^{k} \sigma_{r_k}^{k} \ket{e^a_x} \\
&= (-1)^{\alpha(r,\ell)} \prod_{k=1}^{n} \sigma_{r_k}^{k} \sigma_{\ell_k}^{k} \ket{e^a_x} \\
&= (-1)^{\alpha(r,\ell)} U O^A_\ell \ket{e^a_x}\\
&= \lambda'_\ell U \ket{e^a_x} \, ,
\end{align*}
where we define $\lambda'_\ell := (-1)^{\alpha(r,\ell)} \lambda_\ell$
and the function $\alpha(r,\ell)$ determines the phase arising from
the commutation relations of the $\sigma_{r_k}\!$'s and
$\sigma_{\ell_k}\!$'s. Because $U \ket{e^a_x}$ is a common
eigenvector of all $O^A_\ell$, there exists $z \in \set{0,1}^n$ such
that $\ket{e^a_z} = U \ket{e^a_x}$.
\end{proof}

\section{The Protocol}
The protocol described in Figure~\ref{fig:PVmub} uses an (almost)
complete set of mutually unbiased bases
$\set{\ket{e^a_x}_{x=1,\ldots,2^n}}^{a \in \set{0,1}^n}$ as defined
above. The protocol can be seen as a higher-dimensional extension of
the basic BB84-protocols proposed and analyzed
in~\cite{Kent2010,Buhrman2010}. In~\cite{Beigi2011}, Beigi and K\"onig
show that $\PVmub$ is secure against adversaries that share fewer than
$n/2$ EPR pairs and are restricted to one round of
simultaneous \emph{classical} communication. They leave open whether
the protocol remains secure against colluding adversaries that share
more entanglement. We answer this question here. In the rest of the
section, we show that for the construction of MUBs mentioned above, it
is sufficient for adversaries to share $n$ EPR pairs in order to
perfectly break the protocol $\PVmub$. It follows that the lower bound
on EPR pairs given in~\cite{Beigi2011} is optimal up to constant factors.

\begin{figure}[htb]
\begin{protocol} 
\begin{enumerate}\setlength{\parskip}{0.1ex}\setcounter{enumi}{-1}
\item[0.]  $V_0$ and $V_1$ share common (secret) randomness in the
  form of uniformly distributed bit strings $a,x\in
  \set{0,1}^n$. \label{proto:positionbased}
\item[1.] $V_0$ sends~$a$ to~$P$ and $V_1$ prepares the state
  $\ket{e^a_x}$ and sends it to~$P$. The timing is chosen such that
  both the classical information and the quantum state arrive at the
  prover at the same time.
\item[2.] $P$ measures the state in the basis $\{\ket{e^a_i}\}_i$,
  getting measurement outcome $\hat{x}\in\set{0,1}^n$. He sends $\hat{x}$
  to both $V_0$ and~$V_1$.
\item[3.] $V_0$ and $V_1$ accept if they receive $\hat{x}$ at times
  consistent with~$\hat{x}$ being emitted from the claimed position in both
  directions simultaneously, and $\hat{x}=x$. 
\end{enumerate}
\end{protocol}
 \caption{Protocol $\PVmub$ from~\cite{Beigi2011} for position-verification using mutually
   unbiased bases. \label{fig:PVmub}}
\end{figure}

\section{The Attack}

The attack reported here is very similar to the attack on the
BB84-scheme described in~\cite{Kent2010}. The colluding adversaries
$\tilde{P}_0$ en $\tilde{P}_1$ set up between the prover's claimed
position and the verifiers $V_0$ and $V_1$, intercepting messages from
$V_0$ and $V_1$.


Adversary $\tilde{P}_0$ has knowledge of the basis $a$ and
$\tilde{P}_1$ gets the state $\ket{e^a_x}$.  Our attack shows that
using $n$ EPR pairs and one round of simultaneous classical communication
suffices to determine $x$, and thus breaking protocol $\PVmub$. We
assume that the set of mutually unbiased bases used is equivalent to a
basis obtained by a partitioning of Pauli operators as described
above. To the best of our knowledge, any currently known construction
of mutually unbiased basis sets of dimension $2^n$ is of this form.
If the used set of mutually unbiased bases differs from one of these
by a unitary transform, the attack still works by the adversaries just applying
this unitary before the first step.

As soon as $\tilde{P}_1$ receives the state $\ket{e^a_x}$, she teleports
it to $\tilde{P}_0$ and forwards the classical outcome of the
teleportation measurement indicating the needed Pauli
correction. Using Lemma~\ref{lem:mubtrick}, the teleported state is
still a basis vector of the same mutually unbiased basis, i.e.~the
state $\tilde{P}_0$ has before correction is $\ket{e^a_z}$, with $z$
depending on the teleportation measurement outcome.  $\tilde{P}_0$
measures $\ket{e^a_z}$ in basis $a$, getting outcome $z$ which she
sends to $\tilde{P}_1$.

Now both adversaries know $a$, $z$ and the teleportation correction,
which is all the necessary information to obtain $x$.
In principle, they can now reconstruct $\ket{e^a_z}$, apply the Pauli correction getting $\ket{e^a_x}$ and
measure in basis $a$. In practice, they can also find $x$ by classically computing which $x$
corresponds to which correction, $a$, and $z$, instead of needing to reconstruct the entire state.
%

\chapter{The Garden-Hose Game} \label{sec:gardenhose}
\section{Motivation} \label{sec:motivation}
The results of this section are motivated by the study of a particular
quantum protocol for secure position verification, described in
Figure~\ref{fig:PVqubit}. The protocol is of the generic form
described in Section~3.2 of~\cite{Buhrman2010}. In
Step~\ref{step:preparation}, the verifiers prepare challenges for the
prover. In Step~\ref{step:send}, they send the challenges, timed in
such a way that they all arrive at the same time at the prover. In
Step~\ref{step:prover}, the prover computes his answers and sends them
back to the verifiers. Finally, in Step~\ref{step:verification}, the
verifiers verify the timing and correctness of the answer.

As in~\cite{Buhrman2010}, we consider here for simplicity the case
where all players live in one dimension, the basic ideas generalize to
higher dimensions. In one dimension, we can focus on the case of two
verifiers $V_0, V_1$ and an honest prover $P$ in between them.

We minimize the amount of quantum communication in that only one
verifier, say $V_0$, sends a qubit to the prover, whereas both
verifiers send classical $n$-bit strings $x,y \in \set{0,1}^n$ that
arrive at the same time at the prover. We fix a publicly known boolean
function $f: \set{0,1}^n \times \set{0,1}^n \rightarrow \set{0,1}$
whose output $f(x,y)$ decides whether the prover has to return the
qubit (unchanged) to verifier $V_0$ (in case $f(x,y)=0$) or to
verifier $V_1$ (if $f(x,y)=1$).

\begin{figure}[htb]
\small
\begin{protocol}
\begin{enumerate}\setlength{\parskip}{0.1ex}\setcounter{enumi}{-1}
\item\label{step:preparation} $V_0$ randomly chooses two $n$-bit
  strings $x,y \in \set{0,1}^n$ and privately sends $y$
  to~$V_1$. $V_0$ prepares an EPR pair $(\ket{0}_V\ket{0}_P +
  \ket{1}_V\ket{1}_P)/\sqrt{2}$. If $f(x,y)=0$, $V_0$ keeps the qubit
  in register $V$. Otherwise, $V_0$ sends the qubit in register $V$
  privately to $V_1$.
\item\label{step:send} $V_0$ sends the qubit in register $P$ to the prover $P$ together
  with the classical $n$-bit string $x$. $V_1$ sends $y$ so that it
  arrives at the same time as the information from $V_0$ at $P$.
\item\label{step:prover} $P$ evaluates $f(x,y) \in \set{0,1}$ and
  routes the qubit to $V_{f(x,y)}$.
\item\label{step:verification} $V_0$ and $V_1$ accept if the qubit
  arrives in time at the right verifier and the Bell measurement of
  the received qubit together with the qubit in $V$ yields the correct
  outcome.
\end{enumerate}
\end{protocol}
 \caption{\!\mbox{Position-verification scheme $\PVqubit$ using one
     qubit and classical $n$-bit strings.}}
 \label{fig:PVqubit}
\end{figure}

The motivation for considering this protocol is the following: As the
protocol uses only one qubit which needs to be correctly routed, the
honest prover's quantum actions are trivial to perform. His main task
is evaluating a classical boolean function~$f$ on classical inputs $x$
and $y$ whose bit size $n$ can be easily scaled up. On the other hand,
our results in this section suggest that the adversary's job of
successfully attacking the protocol becomes harder and harder for
larger input strings $x,y$. The hope is that for ``complicated
enough'' functions $f(x,y)$, the amount of EPR pairs needed to
successfully break the security of the protocol $\PVqubit$ grows (at
least) linearly in the bit length $n$ of the classical strings $x,y$.

If this intuition can be proven to be true, it is a very
interesting property of the protocol that we obtain a favorable
relation between quantum and classical difficulty of operations in the
following sense: if we increase the length of the classical inputs
$x,y$, we require more \emph{classical} computing power of the honest
prover, whereas more \emph{quantum} resources (EPR pairs) are required by
the adversary to break the protocol. To the best of our knowledge,
such a trade-off has never been observed for a quantum-cryptographic
protocol. 

In order to analyze the security of the protocol $\PVqubit$, we define
the following communication game in which Alice and Bob play the roles
of the adversarial attackers of $\PVqubit$. Alice starts with an
unknown qubit $\ket{\phi}$ and a classical $n$-bit string $x$
while Bob holds the $n$-bit string $y$. They also share some quantum
state $\ket{\eta}_{AB}$ and both players know the Boolean function $f:
\set{0,1}^n \times \set{0,1}^n \to\{0,1\}$. The players are allowed
one round of simultaneous classical communication combined with
arbitrary local quantum operations. When $f(x,y)=0$, Alice should be in
possession of the state $\ket{\phi}$ at the end of the protocol and on
$f(x,y)=1$, Bob should hold it.

As a simple example consider the case where $f(x,y)=x\oplus y$, the
exclusive OR function, with 1-bit inputs $x$ and $y$. Alice and Bob
then have the following way of performing this task perfectly by using
a pre-shared quantum state consisting of three EPR pairs. Label the first two EPR pairs $0$
and $1$. Alice teleports $\ket{\phi}$ to Bob using the pair labeled
with her input $x$. This yields measurement result $i\in\{0,1,2,3\}$,
while Bob teleports his half of the EPR pair labeled $y$ to Alice
using his half of the third EPR pair while obtaining measurement
outcome $j \in \set{0,1,2,3}$ . In the round of simultaneous
communication, both players send the classical measurement results and
their inputs $x$ or $y$ to the other player. If $x\oplus y=1$,
i.e.~$x$ and $y$ are different bits, Bob can apply the Pauli operator
$\sigma_{i}$ to his half of the EPR pair labeled $x=y\oplus 1$,
correctly recovering $\ket{\phi}$. Similarly, if $x\oplus y=0$, it is
easy to check that Alice can recover the qubit by applying
$\sigma_{i}\sigma_{j}$ to her half of the third EPR pair.

If Alice and Bob are {\em constrained} to the types of actions in the example above, i.e., if they are restricted to teleporting the quantum state back and forth depending
on their classical inputs, this leads to the following notion of garden-hose game and garden-hose complexity. 

\section{Definition of the Garden-Hose Game}

Alice and Bob get $n$-bit input strings $x$ and $y$, respectively.
Their goal is to ``compute'' an agreed-upon Boolean function $f:
\set{0,1}^n \times \set{0,1}^n \to \set{0,1}$ on these inputs, in the
following way.  We assume that Alice and Bob have $s$ pipes between
them. Depending on their respective classical inputs $x$ and $y$, they
connect their ends of the pipes with pieces of hose, of which they
have an unlimited amount.  Note however, that we do not allow
``T-pieces'' (or more complicated constructions) of hose which connect
two or more pipes to one, or vice versa; only one-to-one connections
are allowed.  Alice has a source of water which she connects to one of
the pipes, and then she turns on the water. It is easy to check that
the water will flow out on either of the sides, i.e.~no ``deadlocks'' are possible.
The players succeed in computing
$f$ (we may also say: they win the garden-hose game), if the water
comes out of one of the pipes on Alice's side whenever $f(x,y) = 0$,
and the water comes out of one of the pipes on Bob's side whenever
$f(x,y) = 1$.  Note that it does not matter out of which pipe the
water flows, only on which side it flows.  We stress once more that
what makes the game non-trivial is that Alice and Bob must do their
``plumbing'' based on their local input only, and they are not allowed
to communicate. We refer to Figure~\ref{fig:xor} for an illustration
of computing the XOR function in the garden-hose model.

\begin{figure}
\begin{framed}
\center
\includegraphics{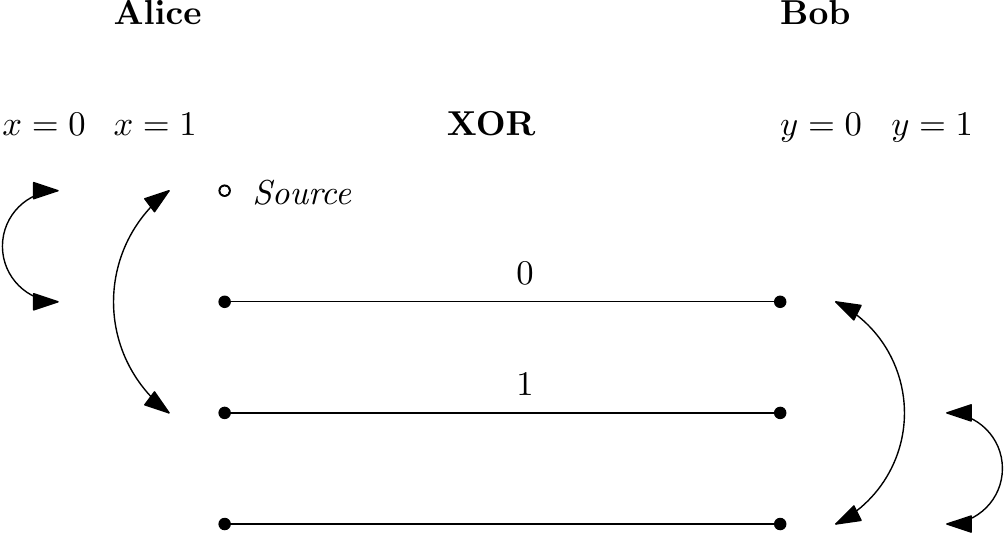}
\end{framed}
\caption{Garden-hose game for the XOR function. \label{fig:xor}}
\end{figure}

We can translate any strategy of Alice and Bob in the garden-hose game
to a perfect quantum attack of $\PVqubit$ by using one EPR pair per
pipe and performing Bell measurements where the players connect the
pipes.  Our hope is that also the converse is true in spirit: if many
pipes are required to compute $f$, say we need superpolynomially many,
then the number of EPR pairs needed for Alice and Bob to successfully
break $\PVqubit$ with probability close to $1$ by means of an {\em
  arbitrary} attack (not restricted to Bell measurements on EPR pairs)
should also be superpolynomial. We leave this as an interesting
problem for future research. 
We stress that for this application, a polynomial lower bound on the number of pipes,
and thus on the number of EPR pairs, is already interesting.

We formalize the above description of the garden-hose game, given
in terms of pipes and hoses etc., by means of rigorous graph-theoretic
terminology.  However, we feel that the above terminology captures the
notion of a garden-hose game very well, and thus we sometimes use the
above ``watery'' terminology.  We start with a balanced bi-partite
graph $(A\cup B,E)$ which is 1-regular and where the cardinality of
$A$ and $B$ is $| A | = | B | = s$, for an arbitrary large $s \in \N$.  We
slightly abuse notation and denote both the vertices in $A$ and in $B$
by the integers $1,\ldots,s$.  If we need to distinguish $i \in A$
from $i \in B$, we use the notation $i^A$ and $i^B$.  We may 
assume that $E$ consists of the edges that connect $i \in A$ with $i
\in B$ for every $i \in \set{1,\ldots,s}$, i.e., $E =
\Set{\set{i^A,i^B}}{1 \leq i \leq s}$.  These edges in $E$ are the
{\em pipes} in the above terminology.  We now extend the graph to
$(A_\circ \cup B,E)$ by adding a vertex $0$ to $A$, resulting in
$A_\circ = A \cup \set{0}$.  This vertex corresponds to the {\em water
  tap}, which Alice can connect to one of the pipes.  Given a Boolean
function $f: \set{0,1}^n \times \set{0,1}^n \to \set{0,1}$, consider
now two functions $E_{A_\circ}$ and $E_B$; both take as input a string
in $\set{0,1}^n$ and output a set of edges (without self loops). For
any $x,y \in \set{0,1}^n$, $E_{A_\circ}(x)$ is a set of edges on the
vertices $A_\circ$ and $E_B(y)$ is a set of edges on the vertices $B$,
so that the resulting graphs $(A_\circ,E_{A_\circ}(x))$ and
$(B,E_B(y))$ have maximum degree at most $1$.  $E_{A_\circ}(x)$
consists of the {\em connections} among the pipes (and the tap) on
Alice's side (on input $x$), and correspondingly for $E_B(y)$.  For
any $x,y \in \set{0,1}^n$, we now define the graph $G(x,y) = (A_\circ
\cup B,E \cup E_{A_\circ}(x) \cup E_B(y))$ by adding the edges
$E_{A_\circ}(x)$ and $E_B(y)$ to $E$.  $G(x,y)$ consists of the pipes
with the connections added by Alice and Bob.
Note that the vertex $0 \in A_\circ$ has degree at most $1$, and the
graph $G(x,y)$ has maximum degree at most two $2$; it follows that the
maximal path $\pi(x,y)$ that starts at the vertex $0 \in A_\circ$ is
uniquely determined. $\pi(x,y)$ represents the flow of the water, and
the endpoint of $\pi(x,y)$ determines whether the water comes out on Alice
or on Bob's side (depending on whether it is in $A_\circ$ or in $B$).

\begin{definition}
  A {\bf garden-hose game} is given by a graph function $G: (x,y)
  \mapsto G(x,y)$ as described above.  The number of pipes $s$ is
  called the {\bf size} of $G$, and is denoted as $s(G)$.  A
  garden-hose game $G$ is said to {\bf compute} a Boolean function $f:
  \set{0,1}^n \times \set{0,1}^n \to \set{0,1}$ if the endpoint of the
  maximal path $\pi(x,y)$ starting at $0$ is in $A_\circ$ whenever
  $f(x,y) = 0$ and in $B$ whenever $f(x,y) = 1$.
\end{definition}


\begin{definition}
The (deterministic) {\bf garden-hose complexity} of a Boolean function
$f: \set{0,1}^n \times \set{0,1}^n \to \set{0,1}$ is the size $s(G)$
of the smallest garden-hose game $G$ that computes $f$. We denote it
by $\gh(f)$. 
\end{definition}

We start with a simple upper bound on $\gh(f)$ which is implicitly
proven in the attack on Scheme II in~\cite{Kent2010}].
\begin{proposition}
  For every Boolean function $f: \set{0,1}^n \times \set{0,1}^n \to
  \set{0,1}$, the garden-hose complexity is at most $\gh(f) \leq 2^{n}+1$.
\end{proposition}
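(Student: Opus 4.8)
The plan is to exhibit an explicit garden-hose game of size $2^n + 1$ that computes an arbitrary $f$, essentially by encoding a truth-table lookup of $f$ into the plumbing. The key idea is that Bob's input $y$ pins down a column of the truth table, and within that column Alice only needs to route the water to ``her side'' or ``Bob's side'' according to the single bit $f(x,y)$, which she knows since she sees $x$ (not quite — she doesn't see $y$; so the routing must be arranged so that \emph{Bob} effectively selects the answer). Concretely, label the $2^n$ pipes by strings $w \in \set{0,1}^n$, plus have one extra pipe. Alice connects the tap $0$ to the pipe labelled $x$ (one hose on her side). Bob, on input $y$, looks at each pipe $w$ and decides, based on the value $f(w,y)$, whether to ``reflect'' the water back to Alice's side or to ``absorb'' it on his side: if $f(w,y) = 0$ he connects pipe $w$ on his side to the extra $(2^n+1)$-st pipe, whose Alice-side end is left open (so the water returns and exits on Alice's side); if $f(w,y) = 1$ he leaves pipe $w$'s Bob-side end open (so the water exits on Bob's side). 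Then the water, entering pipe $x$, comes out on Bob's side exactly when $f(x,y) = 1$, as required. I would present this with the rigorous graph-theoretic notation: define $E_{A_\circ}(x) = \set{\set{0, x}}$ (reading $x$ as an element of $A$), and define $E_B(y)$ to consist of the edges $\set{w^B, (2^n{+}1)^B}$ for those $w$ with $f(w,y)=0$ that have not already been paired — but here one must be slightly careful, since many such $w$ would all want to connect to the single extra pipe, violating the degree-$1$ condition.

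To fix that degree issue, I would simply \emph{not} route through a shared extra pipe. Instead: use $2^n$ pipes labelled by $\set{0,1}^n$ and no extra pipe at all. Alice connects the tap to pipe $x$. For Bob's side, partition the pipe-labels into the $0$-set $Z_y = \Set{w}{f(w,y)=0}$ and the $1$-set $O_y = \Set{w}{f(w,y)=1}$; on pipes in $O_y$ Bob leaves the ends open (water exits on Bob's side), and on pipes in $Z_y$ Bob must send the water back to Alice's side — he can pair up the pipes in $Z_y$ among themselves, connecting $w^B$ to $w'^B$ for a chosen pairing. Then if the water enters on a pipe in $Z_y$ it comes back out on a (possibly different) Alice-side pipe-end, which Alice leaves open, so it exits on Alice's side. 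The only snag is parity: $Z_y$ may have odd cardinality, in which case one pipe in $Z_y$ is unpaired on Bob's side and the water would deadlock there — contradicting the claim that water always exits. This is exactly where the ``$+1$'' comes from: add one extra pipe $\star$; Alice leaves $\star^A$ open, and Bob uses $\star^B$ as the partner for the leftover pipe in $Z_y$ when $|Z_y|$ is odd (and leaves $\star^B$ open otherwise). Now every $Z_y$-pipe is matched, all degree constraints hold, and correctness is immediate: entering pipe $x$, the unique path exits on Bob's side iff $x \in O_y$ iff $f(x,y)=1$.

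So the steps, in order, are: (1) fix the pipe set $\set{0,1}^n \cup \set{\star}$, of size $2^n+1$; (2) define $E_{A_\circ}(x)$ to be the single edge joining the tap to pipe $x$ (all other Alice-side ends open); (3) define $E_B(y)$ by taking a perfect matching on $Z_y \cup \set{\star}$ if $|Z_y|$ is odd, or a perfect matching on $Z_y$ together with $\star^B$ left open if $|Z_y|$ is even, and leaving all of $O_y$'s Bob-side ends open; (4) check the maximum-degree-$1$ conditions for $(A_\circ, E_{A_\circ}(x))$ and $(B, E_B(y))$, which hold by construction; (5) trace the maximal path $\pi(x,y)$ from $0$: it traverses pipe $x$, and then either immediately exits on Bob's side (if $f(x,y)=1$) or bounces back through at most one more pipe and exits on Alice's side (if $f(x,y)=0$); (6) conclude $\gh(f) \le 2^n + 1$.

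The main obstacle — really the only subtlety — is the parity/degree bookkeeping described above: a naive construction that funnels all ``return'' water through a common node violates the one-to-one (degree $\le 1$) restriction that the model explicitly forbids, and one has to notice that pairing up the $0$-pipes on Bob's side, plus one spare pipe to absorb an odd leftover, is what makes the construction legal. Everything else is routine verification that the described edge sets satisfy the definition of a garden-hose game and that the induced path ends on the correct side.
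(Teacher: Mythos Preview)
Your final construction is correct and is essentially identical to the paper's: Alice connects the tap to pipe $x$, Bob pairs up the pipes in $Z_y=\Set{w}{f(w,y)=0}$ on his side, and the single extra pipe absorbs the odd leftover. One small slip in your narrative: when $|Z_y|$ is odd the unmatched pipe would not cause a ``deadlock'' (the model has no deadlocks, as the paper notes); rather the water would exit on Bob's side, yielding the \emph{wrong} answer for that $x$ --- which is still a bug, and your fix via the spare pipe $\star$ is exactly the right one.
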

\begin{proof}
  We identify $\set{0,1}^n$ with $\set{1,\ldots,2^n}$ in the natural
  way. For $s = 2^n+1$ and the resulting bipartite graph $(A_\circ
  \cup B,E)$, we can define $E_{A_\circ}$ and $E_B$ as
  follows. $E_{A_\circ}(x)$ is set to $\set{(0,x)}$, meaning that Alice connects the tap with the pipe labeled by her input $x$. To define $E_B$,
  group the set $Z(y) = \Set{a \in \set{0,1}^n}{f(a,y)=0}$ arbitrarily
  into disjoint pairs $\set{a_1,a_2} \cup \set{a_3,a_4} \cup \ldots
  \cup \set{a_{\ell-1},a_\ell}$ and set $E_B(y) =
  \set{\set{a_1,a_2},\set{a_3,a_4}, \ldots, \set{a_{\ell-1},a_\ell}}$.
  If $\ell = |Z(y)|$ is odd so that the decomposition into pairs
  results in a left-over $\set{a_\ell}$, then $a_\ell$ is connected
  with the ``reserve'' pipe labeled by $2^n+1$.

By construction, if $x \in Z(y)$ then $x = a_i$ for some $i$, and thus
pipe $x = a_i$ is connected on Bob's side with pipe $a_{i-1}$ or
$a_{i+1}$, depending on the parity of $i$, or with the ``reserve''
pipe, and thus $\pi(x,y)$ is of the form $\pi(x,y) =
(0,x^A,x^B,v^B,v^A)$, ending in $A_\circ$.  On the other hand, if $x
\not\in Z(y)$, then pipe $x$ is not connected on Bob's side, and thus $\pi(x,y) =
(0,x^A,x^B)$, ending in $B$. This proves the claim.  
\end{proof}
We notice that the same proof shows that the garden-hose complexity
$\gh(f)$ is at most $2^k + 1$, when $k$ is the one-way communication complexity from Alice
to Bob of $f$.\footnote{Or if needed, with a small adjustment in the protocol,
$2^k+2$ with $k$ the one-way communication complexity of Bob to Alice.}

We introduce the following terminology. We say that a function $f: \set{0,1}^n \times \set{0,1}^n \to \set{0,1}$ is {\em obtained} from a function $g: \set{0,1}^m \times \set{0,1}^m \to \set{0,1}$ {\em by local pre-processing} if $f$ is of the form $f(x,y) = g(\alpha(x),\beta(y))$, where $\alpha$ and $\beta$ are arbitrary functions $\set{0,1}^n \to \set{0,1}^m$. 
The following invariance under local preprocessing follows immediately from the definition of the garden-hose complexity. 

\begin{lemma}\label{lemma:prepro}
If $f$ is obtained from $g$ by local pre-processing, then $\gh(f) \leq \gh(g)$. 
\end{lemma}

\section{Garden-Hose Complexity of Specific Functions}\label{sec:specificfunctions}

To get a feel for the kind of things that are possible in the garden-hose model,
we will first look at upper bounds for the complexity of several functions that are often
studied in communication complexity settings:
\begin{itemize}
\item Equality: $\mathrm{EQ}(x,y)= 1$ iff $x=y$
\item Bitwise inner product: $\mathrm{IP}(x,y)=\sum_i x_i y_i \pmod{2}$
\item Majority function: $\mathrm{MAJ}(x,y)=1$ iff $\sum_i x_i y_i \geq \lfloor \frac{n}{2} \rfloor+1$
\end{itemize}

\subsection{Equality}
For a graphical depiction of the protocol, see Figure~\ref{fig:eq}.
As initialization, Alice first connects the source to pipe $R_0$, effectively
letting Bob start with the water.

We repeat the same pattern, for every $i$ from $1$ to $n$.
If $y=0$, Bob connects pipe $R_{i-1}$ to pipe $Q^0_i$, and on 
$y=1$, Bob connects pipe $R_{i-1}$ to pipe $Q^1_i$.
On the other side, Alice connects $R_i$ to $Q^0_i$ if $x=0$ and she
connects $R_i$ to $Q^1_i$ instead, if $x=1$.

If $x$ and $y$ are different on bit $j$, then $Q^{y_j}_j$ stays unconnected,
so the water will flow out on Alice's side, right there. If $x$ and $y$ are equal
this situation will never happen, so the water will exit at $R_n$, on Bob's side.
The strategy uses $3n+1$ pipes, so we have shown that
\[
\gh(\mathrm{EQ}) \leq 3n+1 \text{.}
\]

\begin{figure}[h!tb]
\begin{framed}
\center
\includegraphics{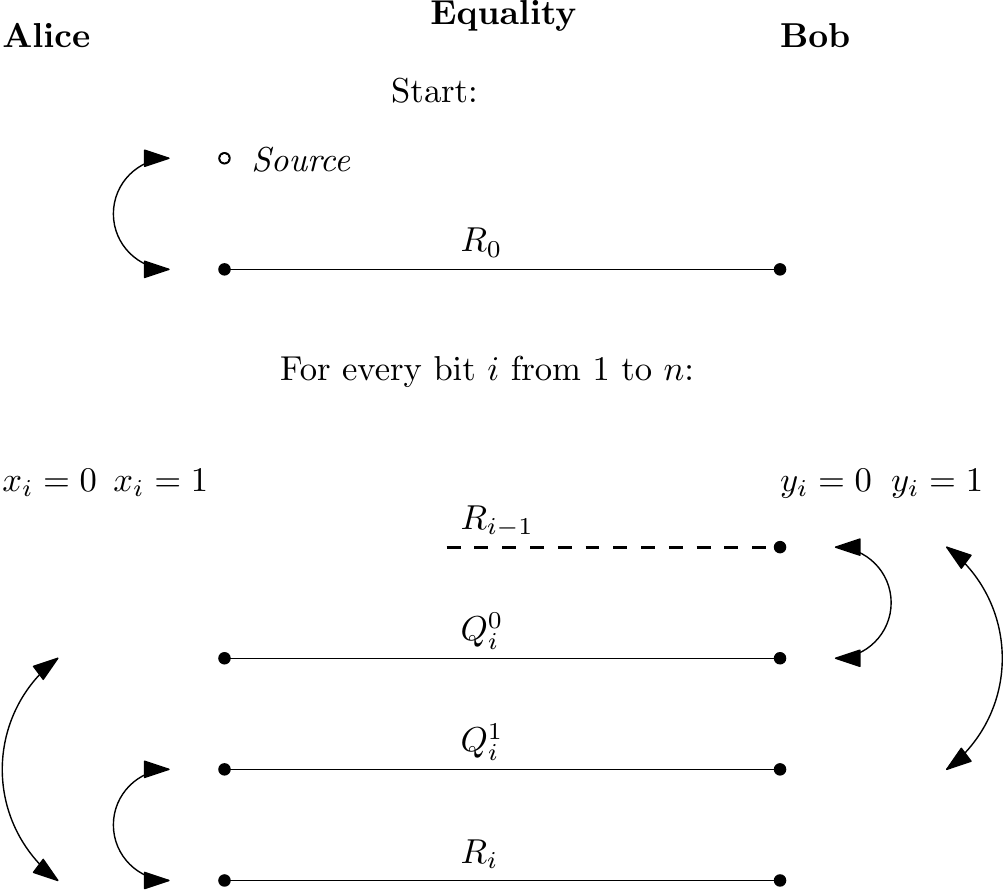}
\end{framed}
\caption{Garden-hose game for the equality function. \label{fig:eq}}
\end{figure}

\subsection{Inner product}
The protocol for inner product is drawn in Figure~\ref{fig:ip}.
Recall that the inner-product function is defined as $\mathrm{IP}(x,y)=\sum_i x_i y_i \pmod{2}$.
To calculate the bitwise inner product, we might let $i$ go from $1$ to $n$, initialize a one-bit result
register with the value $0$, and flip this bit whenever the AND of $x_i$ and $y_i$ equals $1$.
The garden-hose protocol follows a strategy inspired by this idea.

To start, Alice connects the source to $Q^0_k$, with $k$ the first index for which $x_k=1$.
For every $i$ from $1$ to $n$, there are four pipes. If $y_i=0$, Bob connects $Q^0_i$ to $R^0_i$ and
$Q^1_i$ to $R^1_i$. If $y=1$, Bob instead connects $Q^0_i$ to $R^1_i$ and
$Q^1_i$ to $R^0_i$.

Alice does not make any new connections if $x_i=0$, and if $x_i=1$ she connects $R^0_i$ and $R^1_i$ to
$R^0_k$ and $R^1_k$ respectively, with $k$ the next index for which $x_k=1$. If $x_i$ is the
last bit of $x$ equal to 1, Alice does nothing with $R^0_i$ and connects $R^1_i$ to the pipe
labeled $\mathit{End}$.

To see why this construction works, we can compare it to the algorithm
described earlier.
The water flowing through $R^b_i$ corresponds to the result register having value $b$
after step $i$ of the algorithm, and the water changes from the top to bottom
pipe, or vice versa, when $x_i = y_i = 1$. At the last index $k$ for which $x_k=1$,
the water flows to Alice through the pipe corresponding to the final function value.
Alice leaves $R^0_k$ unconnected, so the water exits at Alice's side if $\mathrm{IP}(x,y)=0$.
She connects $R^1_k$ to the pipe $\mathit{End}$, which is unconnected on Bob's side,
making the water exit at Bob's side if $\mathrm{IP}(x,y)=1$.

The strategy uses $4n+1$ pipes, letting
us upper bound the garden-hose complexity with
\[
\gh(\mathrm{IP}) \leq 4n+1 \text{.}
\]

\begin{figure}[h!tb]
\begin{framed}
\center
\includegraphics{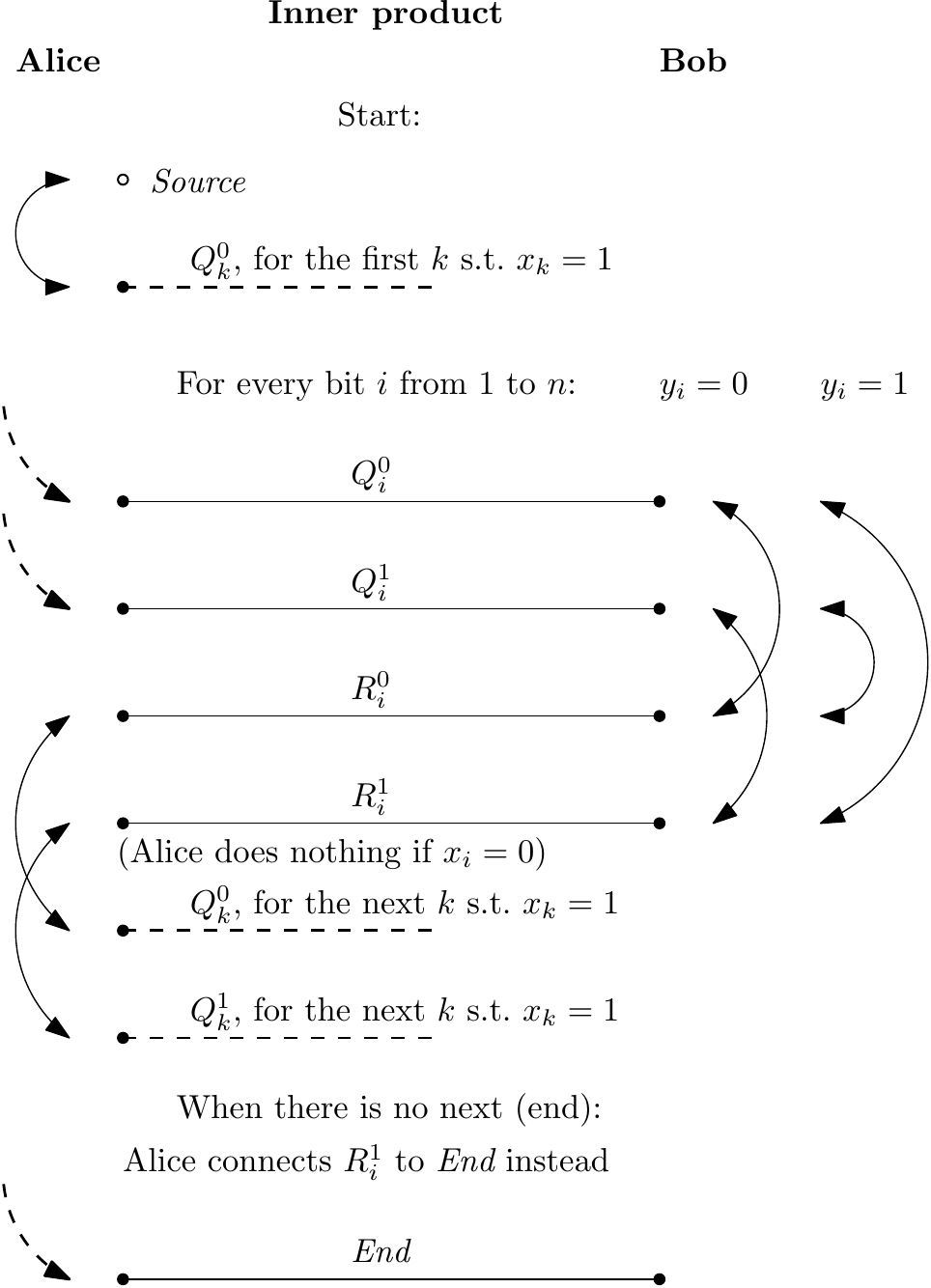}
\end{framed}
\caption{Garden-hose game for the inner-product function. \label{fig:ip}}
\end{figure}

\subsection{Majority}\label{sec:majority}
The majority function equals 1 when on at least half
of indices $i$ we have that both $x_i=1$ and $y_i=1$.
Our strategy in the garden-hose game is inspired by the following algorithm
for majority. We iterate over all indices $i$ from $1$ to $n$ and
initialize a counter with value 0.
For every $i$, we add $1$ to the counter if both $x_i=1$ and $y_i=1$.
If the value in the counter reaches $\left\lfloor \frac{n}{2}\right\rfloor+1$, we stop and answer 1.
Here $\left\lfloor z \right\rfloor$ is the floor function which maps
a real number $z$ to the largest integer not greater $z$. Otherwise, if we reach the end, we give the answer 0.
Our garden-hose strategy works in a similar way, with the pipe the water flows through
acting as a `counter'. For simplicity, we assume that $n$ is a multiple of 2. It is easy to extend
the strategy to also work for odd $n$.

See Figure~\ref{fig:maj} for a diagram illustrating the strategy Alice and Bob follow.
Alice connects the source to $Q^0_k$, with $k$ the first index for which $x_k=1$.
For every $i$ from $1$ to $n$, the players have $n/2+1$ pipes labeled $Q^0_i, \dots Q^{n/2}_i$ and
$n/2+1$ pipes labeled $R^0_i, \dots R^{n/2}_i$. If $y_i=0$, Bob connects $Q^m_i$ to $R^m_i$ for every
$m$ from $0$ to $n/2$. If $y_i=1$, Bob instead connects $Q^m_i$ to $R^{m+1}_i$ for every
$m$ from $0$ to $n/2-1$, and leaves $Q^{n/2}_i$ unconnected.

If $x_i=0$, Alice does not make any new connections. Now, look at the case where $x_i=1$.
Let $k$ be the first index greater than $i$ for
which $x_k=1$. If $i$ is the last index for which $x_i=1$, Alice does nothing.
Otherwise, she connects $R^m_i$ to $Q^m_k$, for every $m$ from $0$ to $n/2$.

Having the earlier algorithm in mind, we can see how the garden-hose
strategy works by comparing it to the algorithm.
The water flowing through pipe $Q^c_i$ corresponds to
the counter having value $c$ at step $i$. Where both $x_i=1$ and $y_i=1$, the water will go
to Bob's side in pipe $Q^c_i$ and return to Alice in pipe $R^{c+1}_i$. The water going back in a 
lower pipe is equivalent to incrementing the counter. When the water was coming in through $Q^{n/2}_i$ and $y_i=1$,
the water will exit at Bob's side, since $Q^{n/2}_i$ will be unconnected then.
The water exiting at Bob
corresponds to stopping and answering 1 when the counter reaches $n/2+1$. Finally, if there are not enough
positions $i$ where both $x_i=1$ and $y_i=1$,
the water will exit at Alice at the last $i$ for which $x_i=1$. In the algorithm this case
is equivalent to outputting 0 if the end is reached without terminating earlier.

This strategy uses $n+2$ pipes for every $i$, giving a total upper bound of
\[
\gh(\mathrm{MAJ}) \leq (n+2)^2 \text{.}
\]
It is not hard to get a strategy with approximately half this number, we will give a sketch on how to 
modify the strategy to achieve this improvement.
For $i$ values $1$ to $n/2-1$ we only need pipes $Q^{0}_i$ to $Q^{i}_i$ and $R^{0}_i$ to $R^{i+1}_i$.
We can do leave out the other pipes because the counter can not have reached the corresponding value,
even if all bits of $x$ and $y$ so far were 1.
When $i$ has values $n/2$ to $n$ we only need $Q^{i-n/2}_i$ to $Q^{n/2-1}_i$ and $R^{i+1-n/2}_i$ to $R^{n/2-1}_i$.
Leaving these pipes out is possible because for low values,
at that step, the counter will not be able to reach $n/2+1$, even if all remaining bits of $x$ and $y$ are 1.
We did not include this improvement in the main strategy to keep it simpler, while still keeping
the same upper bound of $O(n^2)$, up to constants.

\begin{figure}[h!tb]
\begin{framed}
\center
\includegraphics{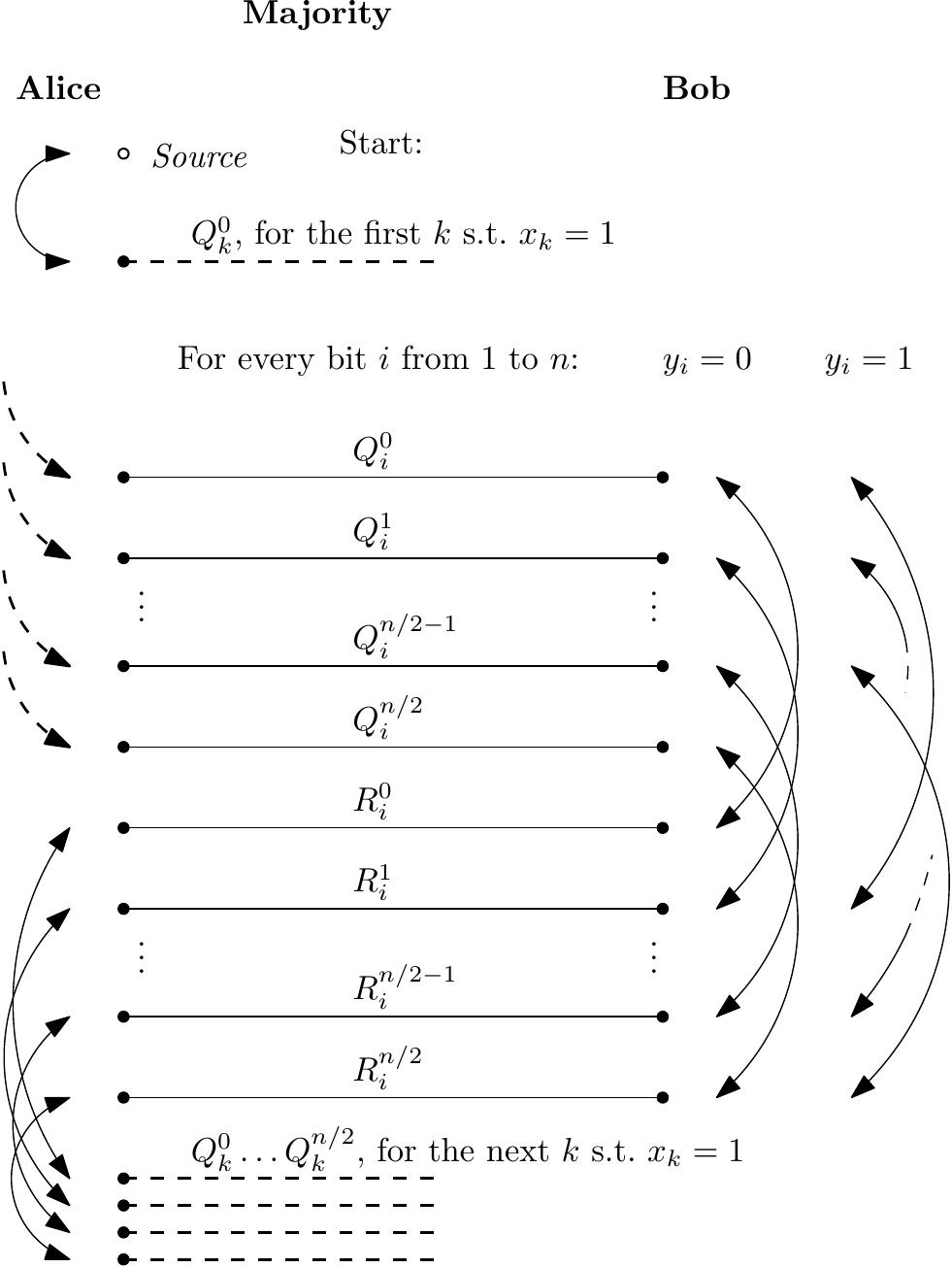}
\end{framed}
\caption{Garden-hose game for the majority function. Note that the $Q^{n/2}_i$ pipe
is left unconnected on Bob's side whenever $y_i=1$.\label{fig:maj}}
\end{figure}

An interesting thing to note is that equality and inner product
can be computed in $n$ steps using constant memory, and we are able to find a
garden-hose strategy using $O(n)$ steps. The obvious way to compute the majority function
needs one counter, using $\log{n}$ memory, and the given upper bound for the garden-hose complexity
is $O(n^2)$. This already hints at the result given in Section~\ref{sec:logspace}, where we
show that any function that can be computed in logarithmic space has at most polynomial
garden-hose complexity.

\section{Encoding Logarithmic-Depth Circuits}\label{sec:logdepthcircuits}
For any function that can be computed by a circuit that has
depth logarithmic in the input length, we can find a strategy.
We use a construction inspired by Barrington's theorem~\cite{Barrington1989},
for which a proof is given in Section~\ref{sec:barringtonstheorem}.
Even though the notation coming from the machinery of Barrington's theorem is a bit involved,
the actual construction in the garden-hose model matches the \emph{permutation branching programs},
whose power follows from Barrington's theorem.

\begin{theorem}
If $f: \set{0,1}^n \times \set{0,1}^n \to \set{0,1}$ is computable in $\NCone$,
then $\gh(f)$ is polynomial in $n$.
\end{theorem}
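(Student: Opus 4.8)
The plan is to combine Barrington's theorem (and its corollary) with a garden-hose simulation of permutation branching programs. Since $f \in \NCone$, there is a fan-in-two circuit $C$ of depth $d = O(\log n)$ computing $f$, and by the corollary to Barrington's theorem there is a 5-PBP $P$ of polynomial length $L = 4^{O(\log n)} = \mathrm{poly}(n)$ that evaluates to a fixed five-cycle $\mu$ when $f(x,y)=1$ and to the identity $e$ when $f(x,y)=0$. The key subtlety is that in our setting the input is split: Alice holds $x$ and Bob holds $y$, whereas a 5-PBP has instructions each reading a single input bit $x_i$ or $y_i$. So I would first observe that we may split $P$ into ``Alice instructions'' (those reading a bit of $x$) and ``Bob instructions'' (those reading a bit of $y$), each player being able to evaluate her/his own instructions locally.

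The main construction is to encode the running product of the 5-PBP as the position of the water among five ``wires''. Concretely, between each pair of consecutive instruction-boundaries I would place a bundle of $5$ pipes, thought of as the five points $\{1,\dots,5\}$ that $S_5$ acts on; the water being in pipe $j$ of bundle $t$ represents the partial product of the first $t$ instructions sending the start point to $j$. To implement an instruction $(i,\sigma,\tau)$ owned by, say, Alice: after evaluating $x_i$ locally she knows which permutation $\pi \in \{\sigma,\tau\}$ applies, and she connects pipe $j$ of the incoming bundle to pipe $\pi(j)$ of the outgoing bundle, for each $j \in \{1,\dots,5\}$ — a valid set of one-to-one hose connections since $\pi$ is a bijection. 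Bob does the analogous thing for his instructions. The tap is connected to pipe $1$ of the first bundle (the start point), so after all $L$ instructions the water sits in pipe $\nu(1)$ of the last bundle, where $\nu$ is the value of the 5-PBP: $\nu = e$ if $f(x,y)=0$ and $\nu = \mu$ if $f(x,y)=1$. Choosing $\mu$ to be a five-cycle with $\mu(1) \neq 1$ (possible by Lemma~\ref{lemma:bt1}, since all five-cycles are conjugate), in the ``$0$'' case the water exits at pipe $1$ of the last bundle and in the ``$1$'' case at pipe $\mu(1) \neq 1$.

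It remains to make the water actually \emph{exit} on the correct side in each case, which is a matter of leaving the right endpoints unconnected. I would route the final bundle so that the single ``accept'' pipe $\mu(1)$ is a pipe on Bob's side that Bob always leaves unconnected, while pipe $1$ of the final bundle is a pipe on Alice's side that Alice always leaves unconnected; the remaining three pipes of the final bundle can be paired up and connected among themselves (or simply left dangling, since the water never reaches them when $\nu$ is $e$ or $\mu$). One small bookkeeping point: the pipes are physically between Alice and Bob, so each bundle of $5$ pipes is a set of $5$ edges $\{j^A, j^B\}$; Alice's connections among pipe-ends live on the $A$-side and Bob's on the $B$-side, exactly as in the formal graph-theoretic definition, so the construction fits the model without modification. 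Counting pipes: we use $O(1)$ pipes per instruction and $L = \mathrm{poly}(n)$ instructions, plus a constant for the tap and final bundle, so $\gh(f) = O(L) = \mathrm{poly}(n)$.

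The step I expect to need the most care is the interface between Alice-owned and Bob-owned instructions: because consecutive instructions in $P$ may alternate owners, each bundle's outgoing ends on Alice's side must line up with the next instruction's incoming ends — but since a bundle is a set of pipes accessible from \emph{both} sides, there is no real conflict; the only thing to verify is that no pipe-end gets degree $2$ (which holds because within one side each pipe appears in exactly one connection, as a permutation is a perfect matching from incoming labels to outgoing labels through distinct pipes). I would also double-check the edge case where the 5-PBP has length $0$ or where $C$ has depth $0$, handled directly as in the base case of Barrington's theorem. Modulo these routine checks, the theorem follows by taking $\gh(f) \le 5L + O(1) = \mathrm{poly}(n)$.
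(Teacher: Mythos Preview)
Your approach is the same as the paper's: simulate a Barrington 5-PBP by laying down a bundle of five pipes per instruction boundary and wiring each bundle to the next by the permutation that the owning player can evaluate locally, then read off $e$ versus $\mu$ at the last bundle. The paper's proof is essentially identical in structure and in the final pipe count $5L + O(1)$.

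There is, however, one genuine slip in your ``interface'' paragraph. You say the case needing care is when consecutive instructions \emph{alternate} owners, and that no pipe-end gets degree~2 ``because within one side each pipe appears in exactly one connection''. The alternating case is in fact the easy one: if Alice wires bundle $t{-}1$ to bundle $t$ and Bob wires bundle $t$ to bundle $t{+}1$, each end of every pipe in bundle $t$ has degree exactly~1. The problematic case is when two \emph{consecutive} instructions are owned by the \emph{same} player, say Alice. Then Alice would have to connect bundle $t{-}1$ to bundle $t$ and also bundle $t$ to bundle $t{+}1$, giving every Alice-end in bundle $t$ degree~2; worse, after the water traverses a pipe in bundle $t$ it sits on Bob's side, where Bob has made no connection, so it simply spills out there. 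Your justification that ``a permutation is a perfect matching'' does not save this, because the same five Alice-ends are being used as both the outgoing ends of one matching and the incoming ends of the next.

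The fix is the one-line observation the paper makes explicitly: without loss of generality the instructions alternate between $x$-bits and $y$-bits, since any maximal run of same-owner instructions can be collapsed locally into a single instruction by composing the permutations (or, equivalently, one can interleave identity instructions for the other player). With that normalization your construction goes through verbatim, and the bound $\gh(f)\le 5L+O(1)=\mathrm{poly}(n)$ follows.
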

\begin{proof}
Define $z \in \set{0,1}^{2n}$ as the concatenation of the inputs of $f(x,y)$, where
$x$ is the part of the input Alice holds and Bob has $y$. We
say that $f$ is computable in $\NCone$ if there exists a circuit $C$ with the following properties.
The depth $d$ of $C$ is logarithmic in the input length $2n$. The circuit $C$ has
fan-in 2, and consists of only NOT, AND and OR gates. $C$ outputs 1 if and only if $f$ outputs 1 on the
same input.
 
We define an \emph{instruction} as a triple $(j, \mu, \nu)$, where $j$ is
the index to a bit of the input, and $\mu$ and $\nu$ are permutations
in the symmetric group $S_5$. We define a \emph{program} as a list of instructions. A
program evaluates to the product of the value of its instructions. We say that a program 
\emph{computes} a circuit $C$ if it evaluates to a 5-cycle $\mu$ when $C$ is true,
and to the identity $e$ when $C$ is false.

It follows from Barrington's theorem that, given $C$, we can construct a program
$P$ with length $l$ at most $4^d$. This $P$ computes $f$, and has
length polynomial in $n$. Without loss of generality we can assume that the instructions alternate
between depending on $x$ and depending on $y$. If the instructions do not alternate, it will be easy
to modify the construction so that the players can locally collect multiple instructions together.
We also assume that the number of instructions $l$ is even.

Let $P_i$ be the $i$-th instruction of $P$. Alice can evaluate all the odd instructions and Bob knows all the even
instructions. Recall that if $f(x,y)=0$, the product of these evaluations is the identity permutation $e$,
and on $f(x,y)=1$, the product is some other 5-cycle. Let $P_i(a)$ be the evaluation of the $i$-th instruction
acting on the number $a$. Label pipes $Q^1_1$, $Q^1_2$, $Q^1_3$, $Q^1_4$, $Q^1_5$ up to
$Q^l_1$, $Q^l_2$, $Q^l_3$, $Q^l_4$, $Q^l_5$, with $l$ the length of $P$.

First, Alice evaluates $P_1$ and connects the source to pipe $Q^1_{P_1(1)}$.
Then, for every odd $i$ up to $l$, Alice connects pipe $Q^i_k$ to pipe
$Q^{i+1}_{P_i(k)}$, for $k$ from $1$ to $5$; she connects the pipes according the permutation
given by the instructions. Because all the odd instructions depend on $x$, she is able to
find $P_i$ for every odd $i$.
Bob's actions are similar: for every even $i$ up to $l$,
and $k$ from $1$ to $5$, Bob connects pipe $Q^i_k$ to pipe
$Q^{i+1}_{P_i(k)}$.
At the end, Alice leaves $Q^l_1$ unconnected and uses 4 pipes to let $Q^l_2, \dots, Q^l_5$ go to Bob.

Because we linked up the groups of 5 pipes according to the permutations given by the permutation branching program,
if $f(x,y)=0$ the identity permutation will be applied in total, so water will flow through $Q^l_1$, correctly
exiting at Alice's side.
Otherwise, if $f(x,y)=1$, the water will go through one of the other pipes, since a cycle permutation does not
leave any position unchanged, correctly letting the water flow to Bob.
\end{proof}

\section{Garden-Hose Complexity and Log-Space Computations}
\label{sec:logspace}

The following theorem shows that for a large class of functions, a
polynomial amount of pipes suffices to win the garden-hose game. A function $f$
with an $n$-bit input
is log-space computable if there is a deterministic Turing machine $M$ and a constant $c$, such that 
$M$ outputs the correct value of $f$, and at most $c \cdot \log{n}$ locations
of $M$'s work tapes are ever visited by $M$'s head during computation of every input of length $n$.

\begin{theorem} \label{thm:logspace}
  If $f: \set{0,1}^n \times \set{0,1}^n \to \set{0,1}$ is log-space computable, 
then $\gh(f)$ is polynomial in $n$.
\end{theorem}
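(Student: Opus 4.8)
The plan is to encode the computation of a logarithmic-space Turing machine for $f$ directly into a garden-hose game, using a layered construction in the same spirit as the one for $\NCone$ in Section~\ref{sec:logdepthcircuits}. Let $M$ be a deterministic Turing machine computing $f$ on the $2n$-bit input $z = xy$ (first $n$ bits carrying Alice's string $x$, last $n$ carrying Bob's string $y$) using $O(\log n)$ work-tape cells. A \emph{configuration} records $M$'s state, the contents and head positions of the work tapes, and the position of the input head; since only $O(\log n)$ work cells are ever used and the input head has $2n+O(1)$ positions, the number of configurations is some $N=\mathrm{poly}(n)$. First I would normalize $M$: by adjoining an $O(\log n)$-bit step counter with a time-out (never triggered on a genuine input, as a correct $M$ halts within $N$ steps) I may assume every run of $M$, from any configuration, halts within $N$ steps; and by appending a short clean-up phase that erases the work tapes and drives the input head to the left end-marker, I may assume $M$ halts in one of two fixed configurations $C_{\mathrm{acc}}$ (if $f(x,y)=1$) or $C_{\mathrm{rej}}$ (if $f(x,y)=0$), both with the input head at the far left.

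The key structural point is the split of the computation between the two players. Call a configuration an \emph{$A$-configuration} if its input head scans one of the first $n$ tape cells (place the end-markers, the start configuration, and $C_{\mathrm{acc}},C_{\mathrm{rej}}$ on the $A$-side by convention) and a \emph{$B$-configuration} otherwise. While $M$'s input head stays in the first half of the tape, every transition reads only a bit of $x$; hence the entire stretch of computation from an $A$-configuration $C$ until the head first enters the second half (or $M$ halts) is a function of $C$ and $x$ alone. So Alice, who knows $x$, can compute the \emph{macro-successor} $\sigma_A(C)$ --- the first $B$-configuration or halting configuration reached from $C$ --- and symmetrically Bob can compute $\sigma_B(C)$ for every $B$-configuration $C$. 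These maps are total (the step-counter normalization guarantees finiteness), the start configuration is an $A$-configuration, macro-steps alternate between $\sigma_A$ and $\sigma_B$, and --- thanks to the clean-up phase --- the configurations $C_{\mathrm{acc}},C_{\mathrm{rej}}$ are only ever produced by a $\sigma_A$-step; since consecutive macro-steps advance the step count, at most $N$ of them occur on any input.

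The garden-hose strategy then has a pipe $P^{(t)}_C$ for every configuration $C$ and every layer $t\in\set{1,\dots,N}$ (that is $N^2=\mathrm{poly}(n)$ pipes, plus a handful of auxiliary output pipes). Alice connects the tap to the $A$-end of $P^{(1)}_{\sigma_A(C_{\mathrm{start}})}$; for every even $t$ and every $A$-configuration $C$ she joins the $A$-ends of $P^{(t)}_C$ and $P^{(t+1)}_{\sigma_A(C)}$; for every odd $t$ and every $B$-configuration $C$ Bob joins the $B$-ends of $P^{(t)}_C$ and $P^{(t+1)}_{\sigma_B(C)}$. Whenever such a macro-successor is $C_{\mathrm{rej}}$, the acting player (always Alice) instead leaves the relevant $A$-end --- or the tap --- unconnected, so the water exits on Alice's side; whenever it is $C_{\mathrm{acc}}$, Alice routes the water through one further pipe whose $B$-end Bob leaves free, so the water exits on Bob's side. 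The parity of the layer index together with the $A/B$-type of the configuration prevents any pipe-end from being joined by two hoses, so this is a legal strategy, and tracing the water shows that it walks through the pipes $P^{(\cdot)}_C$ in exactly the order of the macro-steps of $M$ on input $z$, ending at $C_{\mathrm{acc}}$ (water out on Bob's side) when $f(x,y)=1$ and at $C_{\mathrm{rej}}$ (water out on Alice's side) when $f(x,y)=0$. Since $\mathrm{poly}(n)$ pipes are used, $\gh(f)$ is polynomial.

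The hard part is not the construction itself but making the ``macro-step'' device rigorous: that a run of a two-input log-space machine genuinely decomposes into maximal segments each controlled by a single player's input, that a player can reconstruct locally the interface configuration at the end of such a segment, that these segments are finite (which is precisely what forces the step-counter normalization), and that the treatment of the halting configurations --- via the clean-up phase and the auxiliary pipes --- makes the water leave on the side dictated by $f(x,y)$ for \emph{every} pair $(x,y)$, including the degenerate cases where $M$ never consults $y$. The remaining points --- counting configurations, verifying the degree-$\le 1$ condition by the parity argument, and following the flow of the water --- are routine bookkeeping.
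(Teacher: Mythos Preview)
Your overall architecture --- splitting the run of a log-space machine into maximal segments during which the input head stays on one player's half of the tape, and labelling pipes by configurations --- is exactly the idea in the paper. The gap is at the step you dismiss as ``routine bookkeeping'': the degree-$\le 1$ condition. Your parity-and-type argument only shows that each pipe-end is acted on by a single player in a single \emph{role} (either as the ``source'' end of a hose going to the next layer, or as the ``target'' end of a hose coming from the previous one); it does \emph{not} show that at most one hose lands on a given target. Concretely, if two distinct $A$-configurations $C_1\neq C_2$ at an even layer $t$ satisfy $\sigma_A(C_1)=\sigma_A(C_2)=D$, Alice is instructed to join both $P^{(t)}_{C_1}$ and $P^{(t)}_{C_2}$ to the $A$-end of $P^{(t+1)}_D$, which is forbidden. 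Adjoining an $O(\log n)$-bit step counter does not rule this out: determinism constrains successors, not predecessors, and two configurations carrying the same counter value can perfectly well have the same one-step successor, hence the same macro-successor. So $\sigma_A$ and $\sigma_B$ need not be injective, and the strategy as written is not a legal garden-hose game.

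The paper confronts exactly this obstacle and resolves it by first replacing $M$ with a \emph{reversible} log-space machine, via the space-preserving simulation of Lange, McKenzie and Tapp. Reversibility makes the single-step transition --- and therefore each macro-step --- injective, so two distinct boundary configurations are never wired to the same target. With reversibility in hand the paper in fact dispenses with layering altogether: one pipe per boundary configuration already suffices. Your layered construction would also become legal once reversibility is assumed, but without that ingredient the argument does not go through.
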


In combination with Lemma~\ref{lemma:prepro}, it follows immediately that the same conclusion also holds for functions that are {\em log-space computable up to local pre-processing}, i.e., for any function $f: \set{0,1}^n \times \set{0,1}^n \to \set{0,1}$ that is obtained from a log-space computable function $g: \set{0,1}^m \times \set{0,1}^m \to \set{0,1}$ by local pre-processing, where $m$ is polynomial in $n$. 
Below, in Proposition~\ref{prop:logspaceinv}, we show that log-space up to local pre-processing is also {\em necessary} for a polynomial garden-hose complexity.  

We will later see (Proposition~\ref{prop:expbound}) that there exist functions with large garden-hose complexity.
However, a negative implication of Theorem~\ref{thm:logspace} is that proving the
existence of a {\em polynomial-time computable} function $f$ with exponential garden-hose complexity is at least as hard
as separating $\class{L}$ from $\class{P}$, a long-standing open problem in complexity theory.

\begin{corollary}
If there exists a function $f: \set{0,1}^n \times \set{0,1}^n \to \set{0,1}$ in P that has superpolynomial garden-hose complexity, then {\rm P} $\neq$ {\class{L}}.  
\end{corollary}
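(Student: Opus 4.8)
The plan is to prove the contrapositive and invoke Theorem~\ref{thm:logspace} directly. Suppose $\class{P} = \class{L}$, and let $f : \set{0,1}^n \times \set{0,1}^n \to \set{0,1}$ be any function in $\class{P}$. Identifying $f$ with the language of strings $xy$ (where $x$ is the first half and $y$ the second half of an even-length input) on which $f$ evaluates to $1$, membership in $\class{P}$ means this language lies in $\class{P}$, hence by assumption in $\class{L}$, i.e.\ $f$ is log-space computable in the sense used in Theorem~\ref{thm:logspace}. That theorem then yields $\gh(f) = \mathrm{poly}(n)$. So under the hypothesis $\class{P} = \class{L}$, \emph{every} $f \in \class{P}$ has polynomial garden-hose complexity; contrapositively, if some $f \in \class{P}$ has superpolynomial $\gh(f)$, then $\class{P} \neq \class{L}$.

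The only bookkeeping to check is that the notion of ``input length'' is consistent: a function on two $n$-bit strings has total input length $2n$, and ``polynomial in $n$'' and ``polynomial in $2n$'' coincide, as do ``$c\log(2n)$'' and ``$c'\log n$'' work-tape space, so the translation between the complexity-class statement ($\class{P}$ vs.\ $\class{L}$ for languages) and the per-$n$ statement of Theorem~\ref{thm:logspace} introduces no issues. I expect no real obstacle here — the content is entirely carried by Theorem~\ref{thm:logspace}, and the corollary is just its contrapositive packaged against the standard definitions of $\class{P}$ and $\class{L}$. The one remark worth making explicit in the write-up is why it is meaningful: it says that exhibiting a polynomial-time $f$ with superpolynomial garden-hose complexity would resolve the long-open $\class{L}$ versus $\class{P}$ question, so such a separation cannot be expected by elementary means.
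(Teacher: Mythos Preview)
Your argument is correct and is exactly the intended one: the paper does not spell out a separate proof of the corollary, treating it as an immediate contrapositive consequence of Theorem~\ref{thm:logspace}, which is precisely what you do. Your added remarks on input length and on why the statement is meaningful are accurate and match the surrounding discussion in the paper.
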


\begin{proof}[Proof of Theorem~\ref{thm:logspace}]
Let $M$ be a deterministic Turing machine deciding $f(x,y)=0$. We
assume that $M$'s read-only input tape is of length $2n$ and contains
$x$ on positions $1$ to $n$ and $y$ on positions $n+1$ to $2n$. By
assumption $M$ uses logarithmic space on its work tapes.

In this proof, a \emph{configuration} of $M$ is the location of its
tape heads, the state of the Turing machine and the content of its
work tapes, excluding the content of the read-only input tape.  This
is a slightly different definition than usual, where the content of
the input tape is also part of a configuration. When using the normal
definition (which includes the content of all tapes), we will use the term
\emph{total configuration}. Any configuration of $M$ can be described
using a logarithmic number of bits, because $M$ uses logarithmic
space.

A Turing machine is called \emph{deterministic}, if every total
configuration has a unique next one. A Turing machine is called
\emph{reversible} if in addition to being deterministic, every total configuration also has a unique predecessor.
An $S(n)$ space-bounded deterministic
Turing machine can be simulated by a reversible Turing machine in
space $O(S(n))$~\cite{Lange1998}.
This means that without
loss of generality, we can assume $M$ to be a reversible Turing
machine, which is crucial for our construction. Let $M$ also be
\emph{oblivious}\footnote{A Turing machine is called \emph{oblivious},
  if the movement in time of the heads only depend on the length of the input, known in advance to be $2n$, but
  not on the input itself. For our construction we only require the input tape head to have this property.} in the tape head movement on the input tape. This can be done with only a small increase in space by adding a
counter.

Alice's and Bob's perfect strategies in the garden-hose game are as
follows. They list all configurations where the head of the input tape
is on position $n$ coming from position $n+1$. Let us call the set of
these configurations $C_{A}$. Let $C_{B}$ be the analogous set of
configurations where the input tape head is on position $n+1$ after
having been on position $n$ the previous step. Because $M$ is
oblivious on its input tape, these sets depend only on the function
$f$, but not on the input pair $(x,y)$.  The number of elements of $C_A$
and $C_B$ is at most polynomial, being exponential in the description
length of the configurations.  Now, for every element in $C_{A}$ and
$C_{B}$, the players label a pipe with this configuration.  Also label $|C_{A}|$
pipes $\accept$ and $|C_{B}|$ of them $\reject$.  These steps determine
the number of pipes needed, Alice and Bob can do this labeling
beforehand.

For every configuration in $C_{A}$, with corresponding pipe $p$, Alice
runs the Turing machine starting from that configuration until it
either accepts, rejects, or until the input tape head reaches position
$n+1$. If the Turing machine accepts, Alice connects $p$ to the first free pipe labeled $\accept$.
On a reject, she leaves $p$ unconnected. If
the tape head of the input tape reaches position $n+1$, she connects
$p$ to the pipe from $C_{B}$ corresponding to the configuration of the
Turing machine when that happens. By her knowledge of $x$, Alice knows
the content of the input tape on positions $1$ to $n$, but not the
other half.
Alice also runs $M$ from the starting configuration, connecting the
water source to a target pipe with a configuration from $C_{B}$
depending on the reached configuration.

Bob connects the pipes labeled
by $C_{B}$ in an analogous way: He runs the Turing machine starting
with the configuration with which the pipe is labeled until it halts
or the position of the input tape head reaches $n$. On accepting, the
pipe is left unconnected and if the
Turing machine rejects, the pipe is connected to one of the pipes labeled $\reject$. Otherwise, the
pipe is connected to the one labeled with the configuration in
$C_{A}$, the configuration the Turing machine is in when the head on
the input tape reached position $n$.

In the garden-hose game, only one-to-one connections of pipes are
allowed. Therefore, to check that the described strategy is a valid one,
the simulations of two different configurations
from $C_A$ should never reach the same configuration in
$C_B$. This is guaranteed by the reversibility of $M$ as follows. 
Consider Alice simulating $M$ starting from different configurations $c
\in C_A$ and $c' \in C_A$. We have to check that their simulation can not end at
the same $d \in C_B$, because Alice can not connect both pipes labeled $c$ and $c'$ to the same
$d$. Because $M$ is reversible, we can in principle also simulate $M$ backwards in time starting
from a certain configuration. In particular, Alice can simulate $M$ backwards starting with configuration $d$,
until the input tape head position reaches $n+1$. The configuration of $M$ at that time can not simultaneously
be $c$ and $c'$, so there will never be two different pipes trying to connect to the pipe labeled $d$.

It remains to show that, after the players link up their pipes as described,
the water comes out on Alice's side if $M$ rejects on input $(x,y)$,
and that otherwise the water exits at Bob's.  We can verify the correctness of the described strategy
by comparing the flow of the water directly to the execution of $M$. Every pipe the
water flows through corresponds to a configuration of $M$ when it runs starting from the
initial state. So the side on which the water finally exits also corresponds to whether $M$ accepts or rejects.
\end{proof}

\begin{proposition} \label{prop:logspaceinv}
Let $f: \set{0,1}^n \times \set{0,1}^n \to \set{0,1}$ be a Boolean function. If $\gh(f)$ is polynomial (in $n$), then $f$ is log-space computable up to local pre-processing. 
\end{proposition}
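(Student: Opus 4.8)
The plan is to show that a small garden-hose game for $f$ can be simulated by a log-space Turing machine, after Alice and Bob perform local pre-processing. First I would let $G$ be a garden-hose game of size $s = \gh(f) = \mathrm{poly}(n)$ computing $f$, with edge functions $E_{A_\circ}$ and $E_B$. The key observation is that the trajectory $\pi(x,y)$ of the water can be traced incrementally: the water alternately crosses a pipe (an edge in $E$, always the edge $\{i^A, i^B\}$) and then takes a local connection, first on Bob's side, then on Alice's side, and so on. So I would describe the following procedure. Alice locally computes the set $E_{A_\circ}(x)$, which is a partial matching on $A_\circ = \{0,1,\ldots,s\}$; since $s$ is polynomial, this matching can be encoded as a string $\alpha(x)$ of length $\mathrm{poly}(n)$. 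Symmetrically Bob computes $\beta(y)$, an encoding of the partial matching $E_B(y)$ on $B$. These are exactly the local pre-processing maps $\alpha, \beta : \set{0,1}^n \to \set{0,1}^m$ with $m = \mathrm{poly}(n)$.

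Next I would define a Boolean function $g: \set{0,1}^m \times \set{0,1}^m \to \set{0,1}$ that takes the two encoded matchings and outputs the side on which $\pi$ ends, and argue that $g$ is log-space computable, so that $f(x,y) = g(\alpha(x), \beta(y))$ exhibits $f$ as log-space computable up to local pre-processing. The log-space algorithm for $g$ maintains just a single ``current pipe'' index in $\{1,\ldots,s\}$ together with a bit recording which side the water currently sits on; this is $O(\log s) = O(\log n)$ bits of work space. It starts by reading $E_{A_\circ}$ to find the pipe $i$ that vertex $0$ is connected to (if $0$ is unconnected, the water exits at $A_\circ$ immediately and we output $0$). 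Then it repeatedly: having just arrived at $i$ on Bob's side via pipe $i$, look up in $\beta$ whether $i^B$ is matched to some $j^B$; if not, output $1$; if so, cross pipe $j$ to arrive at $j^A$, then look up in $\alpha$ whether $j^A$ is matched to some $k^A$ (possibly $k = 0$, the tap); if not (or if it is matched to $0$), output $0$; otherwise set $i \gets k$ and repeat on Bob's side. Each step of this simulation only requires reading a constant number of positions of the input and updating the $O(\log n)$-bit state, and the simulation terminates because $G(x,y)$ has maximum degree $2$ so $\pi(x,y)$ never revisits a vertex, hence runs for at most $2s$ steps.

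The only slightly delicate points are bookkeeping, not conceptual: one must fix a concrete encoding of partial matchings that supports the constant-space ``who is $v$ matched to'' lookup — writing the matching as a list of pairs and scanning it each time works, since scanning uses only the $O(\log n)$ bits of a pointer — and one must handle the tap vertex $0$ and the unconnected-endpoint cases uniformly. I expect the main (minor) obstacle is making the termination and correctness argument airtight: I would prove by induction on the length of the prefix of $\pi(x,y)$ that after $k$ iterations the algorithm's ``current pipe'' index equals the index of the $k$-th pipe-edge traversed by the water and the side bit is correct, using the uniqueness of the maximal path $\pi(x,y)$ guaranteed by the degree bound. Since the maps $\alpha,\beta$ and the function $g$ are all computable (locally, resp. in logarithmic space) and $m = \mathrm{poly}(n)$, this establishes that $f$ is log-space computable up to local pre-processing.
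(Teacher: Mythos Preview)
Your proposal is correct and follows essentially the same approach as the paper: encode $E_{A_\circ}(x)$ and $E_B(y)$ as the local pre-processing outputs $\alpha(x),\beta(y)$, then trace the path $\pi(x,y)$ in logarithmic space by maintaining a current-vertex pointer and a side bit. The paper's version is terser---it simply asserts that tracing the path requires ``a couple of pointers to the inputs and a constant number of binary flags''---while you spell out the alternating lookup procedure and the termination argument explicitly; the content is the same. (One harmless redundancy: the case ``$k=0$'' in your loop cannot actually occur after the first step, since vertex~$0$ has degree at most~$1$ in $E_{A_\circ}(x)$ and that edge was already consumed.)
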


\begin{proof}
Let $G$ be the garden-hose game that achieves $s(G) = \gh(f)$. We write $s$ for $s(G)$, the number of pipes, and we let $E_{A_\circ}$ and $E_B$ be the underlying edge-picking functions, which on input $x$ and $y$, respectively, output the connections that Alice and Bob apply to the pipes. 
Note that by assumption, $s$ is polynomial. Furthermore, by the restrictions on $E_{A_\circ}$ and $E_B$, on any input, they consist of at most $(s+1)/2$ connections. 

We need to show that $f$ is of the form $f(x,y) = g(\alpha(x),\beta(y))$, where $\alpha$ and $\beta$ are arbitrary functions $\set{0,1}^n \to \set{0,1}^m$, $f: \set{0,1}^m \times \set{0,1}^m \to \set{0,1}$ is log-space computable, and $m$ in polynomial in $n$. We define $\alpha$ and $\beta$ as follows. For any $x,y \in \set{0,1}^n$, $\alpha(x)$ is simply a natural encoding of $E_{A_\circ}(x)$ into $\set{0,1}^m$, and $\beta(y)$ is a natural encoding of $E_B(y)$ into $\set{0,1}^m$.
In the hose-terminology we say that $\alpha(x)$ is a binary encoding of the connections of Alice,
and $\beta(y)$ is an encoding of the connections of Bob.
Obviously, this can be done with $m$ of polynomial size. Given these encodings, finding the endpoint of the maximum path $\pi(x,y)$ starting in $0$ can be done with logarithmic space: at any point during the computation, the Turing machine only needs to maintain a couple of pointers to the inputs and a constant number of binary flags. 
Thus, the function $g$ that computes $g(\alpha(x),\beta(y)) = f(x,y)$ is log-space computable in $m$ and thus also in $n$. 
\end{proof}

\section{Lower Bounds}\label{sec:ghlowerbounds}
In this section, we present lower bounds on the number of pipes
required to win the garden-hose game for particular (classes of) functions.

\begin{definition}\label{def:injective} We call a function $f$ \emph{injective for Alice}, if for every two different inputs $x$ and $x'$ there exists $y$ such
  that $f(x,y) \neq f(x',y)$. We define \emph{injective for Bob} in an
  analogous way: for every $y \neq y'$, there exists $x$ such that
  $f(x,y) \neq f(x,y')$ holds.
\end{definition}

\begin{proposition}
  If $f$ is injective for Bob or $f$ is injective for Alice,
  then
  $$\gh(f) \log(\gh(f)) \geq n \, .$$
\end{proposition}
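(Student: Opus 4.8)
The plan is to prove the contrapositive flavor directly: show that a garden-hose game computing $f$ with few pipes gives an efficient encoding of the input, contradicting injectivity unless $s\log s \geq n$. First I would fix a garden-hose game $G$ of size $s = \gh(f)$ that computes $f$, with edge-picking functions $E_{A_\circ}$ and $E_B$. Suppose $f$ is injective for Alice. The key observation is that Alice's behaviour in the game depends only on the connection pattern $E_{A_\circ}(x)$ she places on her $s+1$ vertices $A_\circ = \{0,1,\ldots,s\}$, and this pattern is a matching (maximum degree $1$). So the ``type'' of Alice's input, as far as the game is concerned, is one of the matchings on $s+1$ vertices.

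The main step is a counting argument. I claim that if $E_{A_\circ}(x) = E_{A_\circ}(x')$ then $f(x,y) = f(x',y)$ for all $y$: indeed, the graph $G(x,y)$ depends on $x$ only through $E_{A_\circ}(x)$, hence the maximal path $\pi(x,y)$ from $0$ and its endpoint are identical for $x$ and $x'$, so $f(x,y) = f(x',y)$. By injectivity for Alice this forces $E_{A_\circ}$ to be injective on $\set{0,1}^n$, so the number of matchings on $s+1$ vertices must be at least $2^n$. Bounding the number of matchings on $s+1$ vertices crudely from above — e.g.\ each of the $s+1$ vertices is either unmatched or matched, and a matching is determined by choosing, for each vertex in increasing order, its partner among the remaining vertices or ``none'', giving at most $(s+1)!! \leq (s+1)^{(s+1)/2} \le (s+1)^{s}$ or even more simply at most $(s+2)^{s+1}$ — yields $2^n \leq (s+2)^{s+1}$, i.e.\ $n \leq (s+1)\log(s+2)$. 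With $s = \gh(f)$ this already gives $\gh(f)\log(\gh(f))$ of the same order as $n$ up to constants; to land exactly on $\gh(f)\log(\gh(f)) \geq n$ one should pick a slightly sharper bound on the matching count. The case where $f$ is injective for Bob is symmetric, using $E_B$ on the $s$ vertices of $B$ instead.

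The hard part will be getting the constants to come out so that the clean inequality $\gh(f)\log(\gh(f)) \geq n$ holds exactly rather than just up to a constant factor; this requires a careful upper bound on the number of matchings (equivalently partial matchings) on roughly $s$ vertices. I expect the right estimate is something like: the number of matchings on $s$ vertices is at most $s^{s/2} \cdot$ (small factor), and more usefully, each matching is described by a function from $\{1,\ldots,s\}$ to $\{1,\ldots,s\}$ (vertex $\mapsto$ partner, or itself if unmatched), of which there are $s^s$; but since this overcounts badly, a bound of the form $2^n \le s^s$ giving $n \le s\log s$ is the cleanest. So I would organize the final inequality around ``the number of partial matchings on the relevant vertex set is at most $s(G)^{s(G)}$'', which after taking logarithms yields $n \le \gh(f)\log(\gh(f))$ directly. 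The remaining routine check is that $E_{A_\circ}$ (resp.\ $E_B$) being injective really does follow from $f$ being injective for Alice (resp.\ Bob), which is immediate from the path-uniqueness already established in the definition of the garden-hose game.
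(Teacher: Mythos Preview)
Your approach is essentially the same as the paper's: a pigeonhole/counting argument showing that the edge-picking function must be injective on $\{0,1\}^n$, and then upper-bounding the number of possible connection patterns by $s^s$ via the ``each vertex maps to its partner or to itself'' encoding. The only difference is that the paper carries out the argument on Bob's side first, where there are exactly $s$ vertices, so the bound $s^s = 2^{s\log s}$ drops out immediately and yields $\gh(f)\log(\gh(f)) \geq n$ without any fiddling with constants; your choice to start on Alice's side (with the extra tap vertex, hence $s+1$ vertices) is what creates the awkward $+1$'s you are fighting.
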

\begin{proof}
  We give the proof when $f$ is injective for Bob. The proof for the
  case where $f$ is injective for Alice is the same.  Consider a
  garden-hose game $G$ that computes $f$. Let $s$ be its size
  $s(G)$. Since, on Bob's side, every pipe is connected to at most one other pipe, there are at most $s^s = 2^{s \log(s)}$
  possible choices for $E_B(y)$, i.e., the set of connections on Bob's side. Thus, if $2^{s \log(s)} < 2^n$, it
  follows from the pigeonhole principle that there must exist $y$ and
  $y'$ in $\set{0,1}^n$ for which $E_B(y) = E_B(y')$, and thus for
  which $G(x,y) = G(x,y')$ for all $x \in \set{0,1}^n$. But this
  cannot be since $G$ computes $f$ and $f(x,y)\neq f(x,y')$ for some
  $x$ due to the injectivity for Bob. Thus, $2^{s \log(s)} \geq 2^n$
  which implies the claim. 
\end{proof}

We can use this result to obtain an almost linear lower bound for the functions we looked at in
Section~\ref{sec:specificfunctions}. The bitwise inner product, equality and majority functions
are all injective for both Alice and Bob, giving us the following corollary.

\begin{corollary}
The functions bitwise inner product, equality and majority have garden-hose complexity at least
$\frac{n}{\log (n)}$.
\end{corollary}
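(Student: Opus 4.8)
The plan is to verify the two ingredients the corollary needs: first, that each of the three functions (bitwise inner product, equality, and majority) is injective for Alice and for Bob in the sense of Definition~\ref{def:injective}; and second, to plug this into the preceding proposition and solve the resulting inequality for $\gh(f)$.

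For the injectivity claims I would argue as follows. For equality, given $x \neq x'$, simply take $y = x$: then $\mathrm{EQ}(x,y) = 1$ but $\mathrm{EQ}(x',y) = 0$, so they differ; by symmetry the same works for Bob. For bitwise inner product, given $x \neq x'$ pick an index $j$ where $x_j \neq x'_j$ and let $y = e_j$ be the $j$-th standard basis vector; then $\mathrm{IP}(x,y) = x_j$ and $\mathrm{IP}(x',y) = x'_j$, which differ, and again the argument is symmetric in the two players. For majority, given $x \neq x'$ pick $j$ with $x_j \neq x'_j$, and choose $y$ to be all-ones on exactly enough coordinates to sit at the threshold: concretely, include index $j$ in the support of $y$ together with exactly $\lfloor n/2 \rfloor$ of the indices on which $x = x' = 1$ if there are that many, so that flipping the single coordinate $j$ pushes the count across $\lfloor n/2\rfloor + 1$. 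One has to be slightly careful here because the count $\sum_i x_i y_i$ also depends on how many common ones $x$ and $x'$ share; the clean way is to note that majority restricted to a single well-chosen $y$ realises an (anti)monotone threshold on a single bit of the input, and since the two inputs differ in that bit one can always position $y$'s support so that exactly one of $x,x'$ meets the threshold. I expect this majority case to be the only mildly fiddly step; the inner-product and equality cases are immediate.

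Given injectivity, the preceding proposition yields $\gh(f)\log(\gh(f)) \geq n$ for each of these $f$. It remains to convert this into the stated bound $\gh(f) \geq n/\log(n)$. Write $s = \gh(f)$. We may assume $s \leq n$, since otherwise $s \geq n \geq n/\log(n)$ is trivially true. Then $\log(s) \leq \log(n)$, so $n \leq s\log(s) \leq s\log(n)$, and dividing by $\log(n)$ gives $s \geq n/\log(n)$, as claimed. (For small $n$ where $\log n$ could be zero or one, the inequality is vacuous or follows directly, so no harm is done.) This last manipulation is entirely routine; the only real content is the injectivity verification, and among those the majority function is the step I would be most careful with.
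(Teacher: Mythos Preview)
Your approach is exactly the paper's: check that each function is injective for Alice or Bob and invoke the preceding proposition. The paper itself gives no details beyond the bare assertion that ``the bitwise inner product, equality and majority functions are all injective for both Alice and Bob'', so your explicit verifications for $\mathrm{EQ}$ and $\mathrm{IP}$, and your derivation of $s\geq n/\log n$ from $s\log s\geq n$, are correct and more careful than what the paper writes.

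There is, however, a genuine gap in the majority case---and it is one the paper shares. With the paper's definition $\mathrm{MAJ}(x,y)=1$ iff $\sum_i x_iy_i\geq\lfloor n/2\rfloor+1$, the function is \emph{not} injective for Alice: any $x$ of Hamming weight at most $\lfloor n/2\rfloor$ satisfies $\sum_i x_iy_i\leq|x|\leq\lfloor n/2\rfloor$ for every $y$, so $\mathrm{MAJ}(x,\cdot)\equiv 0$. For instance, with $n=4$ the inputs $x=1000$ and $x'=0100$ yield identical all-zero rows, and no choice of $y$ separates them; your assertion that ``one can always position $y$'s support so that exactly one of $x,x'$ meets the threshold'' fails precisely here. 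A repair is to note that the \emph{high-weight} inputs do give pairwise distinct rows: if $|x|,|x'|\geq\lfloor n/2\rfloor+1$ and, say, $x_j=1,\,x'_j=0$, take $y\subseteq\mathrm{supp}(x)$ of size exactly $\lfloor n/2\rfloor+1$ containing $j$; then $\mathrm{MAJ}(x,y)=1$ while $|x'\cap y|\leq|y|-1$ forces $\mathrm{MAJ}(x',y)=0$. Since there are $\Theta(2^n)$ such high-weight inputs, rerunning the pigeonhole argument of the proposition against this many distinct rows still gives $s\log s\geq n-O(1)$, and the claimed bound for $\mathrm{MAJ}$ survives up to a negligible additive loss.
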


\begin{proposition}\label{prop:expbound}
  There exist functions $f:\set{0,1}^n \times \set{0,1}^n \to
  \set{0,1}$ for which $\gh(f)$ is exponential.
\end{proposition}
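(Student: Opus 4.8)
The plan is to prove the existence of functions with exponential garden-hose complexity by a counting argument. The key observation is that there are not too many garden-hose games of a given size, so if the number of Boolean functions on $\set{0,1}^n \times \set{0,1}^n$ outgrows the number of small garden-hose games, some function must require many pipes.

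\textbf{Step 1: Count the garden-hose games of size $s$.} A garden-hose game of size $s$ is specified by the two edge-picking functions $E_{A_\circ}$ and $E_B$. The function $E_{A_\circ}$ assigns to each $x \in \set{0,1}^n$ a partial matching on the $s+1$ vertices $A_\circ$, and $E_B$ assigns to each $y$ a partial matching on the $s$ vertices $B$. The number of partial matchings on a set of $O(s)$ vertices is at most $(s+1)^{s+1} = 2^{O(s \log s)}$ (each vertex is either unmatched or matched to one of at most $s$ others). Since there are $2^n$ possible values of $x$ and $2^n$ possible values of $y$, the number of choices for $E_{A_\circ}$ is at most $\big(2^{O(s\log s)}\big)^{2^n} = 2^{O(2^n \cdot s\log s)}$, and similarly for $E_B$. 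Hence the total number of garden-hose games of size $s$ is at most $2^{O(2^n \cdot s \log s)}$.

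\textbf{Step 2: Compare with the number of Boolean functions.} There are $2^{2^{2n}}$ Boolean functions $f: \set{0,1}^n \times \set{0,1}^n \to \set{0,1}$. If every such function had garden-hose complexity at most $s$, then the map sending each $f$ to a garden-hose game computing it (of size $\le s$, padded up to exactly $s$ by adding unused pipes) would be an injection from a set of size $2^{2^{2n}}$ into a set of size $2^{O(2^n \cdot s \log s)}$. This forces $2^{2n} \le O(2^n \cdot s \log s)$, i.e. $s \log s \ge \Omega(2^n / 2^n) \cdot \ldots$; more carefully, $2^{2n} = O(2^n s \log s)$ gives $s \log s = \Omega(2^n)$, hence $s = \Omega(2^n / n)$. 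So some function $f$ has $\gh(f) = \Omega(2^n/n)$, which is exponential in $n$.

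\textbf{Main obstacle.} The argument is essentially routine once the counting is set up correctly; the only place that needs care is making sure the bound on the number of garden-hose games is genuinely of the form $2^{O(2^n \cdot \mathrm{poly}(s))}$ rather than something larger, so that it can be beaten by the doubly-exponential count $2^{2^{2n}}$ of Boolean functions. In particular one must remember that a garden-hose game is a \emph{function} of the input $x$ (respectively $y$), so the exponent carries a factor of $2^n$; getting the interplay between this $2^n$ factor and the $2^{2n}$ exponent right is the crux. A secondary point is to phrase the conclusion for games of size \emph{exactly} $s$ (pad smaller games with isolated pipes) so that "size $\le s$" and "size $= s$" can be used interchangeably in the pigeonhole step.
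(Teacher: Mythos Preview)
Your proposal is correct and takes essentially the same approach as the paper: both count the $2^{2^{2n}}$ Boolean functions against the at most $2^{O(2^n \cdot s\log s)}$ garden-hose games of size $s$ and conclude by pigeonhole that $s\log s = \Omega(2^n)$. The paper's version is slightly more explicit about constants (getting $(s+1)\log(s+1) \ge 2^{n-1}$), while you add the useful remark about padding games up to size exactly $s$, but the arguments are otherwise identical.
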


\begin{proof}
  The existence of functions with an exponential garden-hose complexity
  can be shown by a simple counting argument. There are
  $2^{2^{2n}}$ different functions $f(x,y)$. For a given size $s =
  s(G)$ of $G$, for every $x \in \set{0,1}^n$, there are at most
  $(s+1)^{s+1}$ ways to choose the connections $E_{A_\circ}(x)$ on Alice's
  side, and thus there are at most $((s+1)^{s+1})^{2^n} = 2^{2^n (s+1)
    \log(s+1)}$ ways to choose the function $E_{A_\circ}$. Similarly
  for $E_B$, there are at most $2^{2^n s \log(s)}$ ways to choose
  $E_B$. Thus, there are at most $2^{2\cdot 2^n (s+1) \log(s+1)}$ ways
  to choose $G$ of size $s$. Clearly, in order for every function $f$
  to have a $G$ of size $s$ that computes it, we need that $2\cdot 2^n
  (s+1) \log(s+1) \geq 2^{2n}$, and thus that $(s+1) \log(s+1) \geq
  2^{n-1}$, which means that $s$ must be exponential.
\end{proof}


\section{Notes on feasibility}

It is very common in complexity theory to say that an algorithm is
efficient when it uses only a polynomial amount of resources. This
is also the spirit of the upper bound given in Section~\ref{sec:logspace},
where we showed that the garden-hose complexity of a function that
can be computed in logarithmic space is at most polynomial. For the
task of secure position verification however, a real-world adversary
is quite limited. A protocol for which the best known attack requires
an number of EPR pairs that is almost linear in $n$ (the number of
classical bits) will certainly be not breakable with current technology.
The actions of the honest players, on the other hand, are within reach
to implement, and the basic steps are not much harder than those used
in, for example, the BB84 protocol~\cite{Bennett1984}. If a function should
actually need a quadratic number of EPR pairs to break,
such as the best upper bound we have so far for Majority in section~\ref{sec:majority},
then the corresponding protocol will not be breakable in the foreseeable future.

Of course, we have not proven that the attacks coming from the garden-hose model are optimal;
it might very well be that for some functions there exist quantum
strategies that need much less entanglement than the garden-hose complexity
of that function. We do not know of any such function right now.

Since the honest prover also has to execute the function, the most interesting functions
to look at from a practical perspective will be computable in polynomial time.
Since this thesis has shown that for any function computable in logarithmic space, the dishonest provers
can break the protocol using a polynomial number of EPR pairs, a good candidate 
will be a function $f$ which is in $\class{P}$ but not known to be in $\class{L}$.

Looking for a suitable function gives rise to the following question.
How can we encode the inputs so that the players can not do smart
local pre-processing, i.e.~solve large parts of the problem locally, without needing much of the other half
of the input? One way we propose is to encode the
input with a one-time pad as follows. Given a (hard) function $g(z)$, we define the function
$f$ that is the objective of the garden-hose game as $$f(x,y)=g(x\oplus y) \; .$$
Given an $n$-bit input to the original problem $z$, we give Alice the
random $n$-bit string $r$ and Bob the $n$-bit string $z \oplus r$. 

The strings that Alice and Bob get are both completely random. This makes it harder for
them to smart pre-processing, since every input is equally likely. Even though there
are counter-examples possible, it may be a good option to try.
A disadvantage of this encoding is that we cannot even prove
anymore that any of these functions have exponential garden-hose complexity. The counting argument in Proposition~\ref{prop:expbound} does not work anymore, since we effectively
halve the input length from $2n$ to $n$.

\section{Lower Bounds In The Real World} \label{sec:lowerbound}
In this section, we show that for a function that is injective for
Alice or injective for Bob (according to
Definition~\ref{def:injective}), the dimension of the entangled state
the adversaries need to share in order to attack protocol $\PVqubit$
perfectly has to be of order at least linear in the classical input
size $n$. In other words, they require at least a logarithmic number
of qubits in order to successfully attack $\PVqubit$.

\subsection{Localized Qubits} \label{sec:localized} Assume we have two
bipartite states $\ket{\psi^0}$ and $\ket{\psi^1}$ with the property
that $\ket{\psi^0}$ allows Alice to locally extract a qubit and
$\ket{\psi^1}$ allows Bob to locally extract the same
qubit. Intuitively, these two states have to be different. 

More formally, we assume that both states consist of five registers
$R,A,\tilde{A},B,\tilde{B}$ where registers $R,A,B$ are one-qubit
registers and $\tilde{A}$ and $\tilde{B}$ are arbitrary. We assume
that there exist local unitary transformations $U_{A \tilde{A}}$ and $V_{B
  \tilde{B}}$ such that\footnote{We always assume that these
  transformations act as the identities on the registers we did not
  specify explicitly.}
\begin{align}
U_{A \tilde{A}} \ket{\psi^0}_{RA\tilde{A}B\tilde{B}} &=
\ket{\beta}_{RA} \otimes
\ket{P}_{\tilde{A}B\tilde{B}} \label{eq:unitaryAlice} \\
V_{B \tilde{B}} \ket{\psi^1}_{RA\tilde{A}B\tilde{B}} &=
\ket{\beta}_{RB} \otimes \ket{Q}_{A\tilde{A}\tilde{B}} \, ,
\label{eq:unitaryBob}
\end{align} 
where $\ket{\beta}_{RA} := (\ket{00}_{RA} + \ket{11})_{RA})/\sqrt{2}$
denotes an EPR pair on registers $RA$ and
$\ket{P}_{\tilde{A}B\tilde{B}}$ and $\ket{Q}_{A\tilde{A}\tilde{B}}$
are arbitrary pure states. 

\begin{lemma} \label{lem:innerproduct}
Let $\ket{\psi^0}, \ket{\psi^1}$ be states that
fulfill~\eqref{eq:unitaryAlice} and~\eqref{eq:unitaryBob}. Then,
\[ \big| \, \braket{\psi^0}{\psi^1} \, \big|^2 \leq 1/4 \, . \]
\end{lemma}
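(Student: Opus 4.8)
The plan is to bound $|\braket{\psi^0}{\psi^1}|$ by applying the two local unitaries and then using the structure of the resulting states, which are EPR pairs on different register pairs tensored with junk. Since $U_{A\tilde A}$ and $V_{B\tilde B}$ act on disjoint sets of registers (one on $A\tilde A$, the other on $B\tilde B$), they commute, and the combined unitary $W := U_{A\tilde A}V_{B\tilde B}$ is unitary, so $\braket{\psi^0}{\psi^1} = \bra{\psi^0}W^\dagger W\ket{\psi^1}$. However, the cleaner route is: apply $U_{A\tilde A}$ to both states and $V_{B\tilde B}$ to both states. Write $\ket{\phi^0} := V_{B\tilde B} U_{A\tilde A}\ket{\psi^0}$ and $\ket{\phi^1} := V_{B\tilde B}U_{A\tilde A}\ket{\psi^1}$; since both maps are unitary, $\braket{\psi^0}{\psi^1} = \braket{\phi^0}{\phi^1}$. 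By \eqref{eq:unitaryAlice}, $U_{A\tilde A}\ket{\psi^0} = \ket{\beta}_{RA}\otimes\ket{P}_{\tilde A B\tilde B}$, and applying $V_{B\tilde B}$ to this gives $\ket{\beta}_{RA}\otimes\ket{P'}_{\tilde A B\tilde B}$ for some unit vector $\ket{P'}$ (the EPR pair on $RA$ is untouched by $V_{B\tilde B}$). Symmetrically, $\ket{\phi^1} = \ket{\beta}_{RB}\otimes\ket{Q'}_{A\tilde A\tilde B}$ for some unit vector $\ket{Q'}$.

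The key step is then to bound $|\braket{\phi^0}{\phi^1}|$ where $\ket{\phi^0}$ has the property that the reduced state on register $R$ is maximally mixed with $R$ maximally entangled with $A$, while $\ket{\phi^1}$ has $R$ maximally entangled with $B$ (and $A$ is part of the junk $\ket{Q'}$, so in $\ket{\phi^1}$ the register $R$ is completely uncorrelated with $A$ conditioned on... more precisely, $R$ and $A$ are independent since $RB$ is in a pure product state with $A\tilde A\tilde B$). I would compute the inner product by taking partial traces. Concretely, $|\braket{\phi^0}{\phi^1}|^2 = \tr\big(\proj{\phi^0}\,\proj{\phi^1}\big)$, and I can bound this by $\tr\big(\rho_{RA}^{0}\,\rho_{RA}^{1}\big)$ where $\rho_{RA}^0 = \proj{\beta}_{RA}$ and $\rho_{RA}^1 = \proj{0}_R^{\text{mix}}... $ — wait, more carefully: $\rho^1_{RA}$ is the reduced state of $\ket{\phi^1}$ on $RA$, which equals $(\I_R/2)\otimes \rho^1_A$ for some single-qubit state $\rho^1_A$, because in $\ket{\phi^1}$ the register $R$ is maximally entangled with $B$ (hence maximally mixed) and $A$ lives entirely in the factor $\ket{Q'}_{A\tilde A\tilde B}$ disjoint from $R$, so $R$ and $A$ are in tensor product. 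Then the standard inequality $|\braket{\phi^0}{\phi^1}|^2 \le \tr(\rho^0_{RA}\rho^1_{RA})$ (which holds since tracing out registers can only increase overlap: $|\braket{\phi^0}{\phi^1}|^2 = \tr(\proj{\phi^0}\proj{\phi^1}) \le \tr(\rho^0_{RA}\rho^1_{RA})$ by monotonicity of fidelity/the Cauchy–Schwarz-type bound for partial trace) gives
\[
|\braket{\phi^0}{\phi^1}|^2 \;\le\; \tr\!\big(\proj{\beta}_{RA}\cdot (\I_R/2\otimes\rho^1_A)\big) \;=\; \bra{\beta}_{RA}\big(\I_R/2\otimes\rho^1_A\big)\ket{\beta}_{RA} \;=\; \tfrac12\tr(\rho^1_A)\cdot\tfrac12 \;=\;\tfrac14,
\]
using that the reduced state of $\ket{\beta}_{RA}$ on $A$ is $\I_A/2$, so $\bra{\beta}(\I_R/2\otimes\rho^1_A)\ket{\beta} = \tfrac12\tr_A(\tfrac{\I_A}{2}\rho^1_A) = \tfrac14$.

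The main obstacle I anticipate is justifying cleanly the inequality $|\braket{\phi^0}{\phi^1}|^2 \le \tr(\rho^0_{RA}\rho^1_{RA})$ and making sure the register bookkeeping is airtight — in particular confirming that $V_{B\tilde B}$ genuinely leaves $\ket{\beta}_{RA}$ invariant (it does, since it acts as identity on $R$ and $A$) and that in $\ket{\phi^1}$ the registers $R$ and $A$ really are uncorrelated (they are, since $\ket{\phi^1} = \ket{\beta}_{RB}\otimes\ket{Q'}_{A\tilde A\tilde B}$ is a product across the cut $\{R,B\}$ versus $\{A,\tilde A,\tilde B\}$). The partial-trace inequality itself is standard: for any pure states $\ket{\phi^0},\ket{\phi^1}$ on a system $XY$, $|\braket{\phi^0}{\phi^1}|^2 = F(\proj{\phi^0},\proj{\phi^1}) \le F(\rho^0_X,\rho^1_X) \ge \tr(\rho^0_X\rho^1_X)$ — hmm, I need the direction $|\braket{\phi^0}{\phi^1}|^2 \le \tr(\rho^0_X\rho^1_X)$, which follows directly: $|\braket{\phi^0}{\phi^1}|^2 = |\tr(\proj{\phi^0}\proj{\phi^1})| = |\tr_X\tr_Y(\proj{\phi^0}\proj{\phi^1})| \le \tr_X(\rho^0_X\rho^1_X)$, where the last step uses that $\tr_Y(\proj{\phi^0}\proj{\phi^1}) \preceq$ ... the safest justification is via Cauchy–Schwarz in Hilbert–Schmidt norm combined with $\tr_Y(\proj{\phi^i}) = \rho^i_X$ and the operator inequality $\tr_Y(MN) \le$ something; I would instead simply write $\ket{\phi^0} = \sum_k \sqrt{p_k}\ket{a_k}_X\ket{b_k}_Y$ in a Schmidt basis and expand to get the bound directly. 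I expect this to be the only delicate point; everything else is routine.
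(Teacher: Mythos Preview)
Your reduction to $\ket{\phi^0}=\ket{\beta}_{RA}\otimes\ket{P'}_{\tilde A B\tilde B}$ and $\ket{\phi^1}=\ket{\beta}_{RB}\otimes\ket{Q'}_{A\tilde A\tilde B}$ is exactly what the paper does, and your observation that $\rho^1_{RA}=\tfrac{\I_R}{2}\otimes\rho^1_A$ is correct. The gap is in the inequality you invoke:
\[
\big|\braket{\phi^0}{\phi^1}\big|^2 \;\le\; \tr\!\big(\rho^0_{RA}\,\rho^1_{RA}\big).
\]
This is \emph{false} in general. Take $\ket{\phi^0}=\ket{\phi^1}=\ket{\beta}_{XY}$: the left side is $1$, but $\rho^0_X=\rho^1_X=\I/2$ so the right side is $\tr(\I/4)=1/2$. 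You noticed the problem yourself (``hmm, I need the direction\ldots''): monotonicity of fidelity gives $|\braket{\phi^0}{\phi^1}|^2\le F(\rho^0_X,\rho^1_X)$, but $F(\rho,\sigma)\ge\tr(\rho\sigma)$, so the chain breaks. The Schmidt expansion you propose will not rescue the general statement either.

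The fix, however, is immediate and specific to your situation: $\rho^0_{RA}=\proj{\beta}_{RA}$ is \emph{pure}. Since $\ket{\phi^0}=\ket{\beta}_{RA}\ket{P'}$ is already a product across the $RA\,|\,\tilde A B\tilde B$ cut,
\[
\big|\braket{\phi^0}{\phi^1}\big|^2
=\big|\bra{\beta}_{RA}\bra{P'}\ket{\phi^1}\big|^2
\le \big\|\,(\bra{\beta}_{RA}\otimes\I)\ket{\phi^1}\,\big\|^2
=\bra{\phi^1}\big(\proj{\beta}_{RA}\otimes\I\big)\ket{\phi^1}
=\tr\!\big(\proj{\beta}_{RA}\,\rho^1_{RA}\big),
\]
and your computation $\tr(\proj{\beta}_{RA}\,(\I_R/2\otimes\rho^1_A))=1/4$ finishes it. So your route is sound once you replace the non-existent general lemma by this one-line Cauchy--Schwarz step using purity of $\rho^0_{RA}$.

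For comparison, the paper does not pass through reduced states at all: it expands $\ket{P'}$ in the $B$ register, contracts $\bra{\beta}_{RB}$ against it to pull out a factor $1/\sqrt{2}$ and convert the $B$-components into $R$-components, then contracts $\bra{\beta}_{RA}$ to pull out a second $1/\sqrt{2}$, leaving $\tfrac14$ times the squared inner product of two unit vectors. Your argument is more conceptual (it isolates the structural reason: in $\ket{\phi^1}$ the registers $R$ and $A$ are uncorrelated with $R$ maximally mixed), while the paper's is a bare-hands computation that avoids any auxiliary inequality.
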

\begin{proof}
  Multiplying both sides of~\eqref{eq:unitaryAlice} with
  $U_{A\tilde{A}}^\dag$ and multiplying~\eqref{eq:unitaryBob} with
  $V_{B\tilde{B}}^\dag$, we can write
\begin{align*}
  \big| \, \braket{\psi^0}{\psi^1} \, \big|^2 &= \big| \,
  \bra{\beta}_{RA} \bra{P}_{\tilde{A}B\tilde{B}} \; U_{A\tilde{A}} \,
  V_{B\tilde{B}}^\dag \; \ket{\beta}_{RB} \ket{Q}_{A\tilde{A}\tilde{B}} \,
  \big|^2 \\
  &= \big| \, \bra{\beta}_{RA} \bra{P'}_{\tilde{A}B\tilde{B}}
  \ket{\beta}_{RB} \ket{Q'}_{A\tilde{A}\tilde{B}} \, \big|^2 \; ,
\end{align*}
where we used that $U_{A\tilde{A}}$ and $V_{B\tilde{B}}$ commute and
defined $\ket{P'}_{\tilde{A}B\tilde{B}} := V_{B\tilde{B}}
\ket{P}_{\tilde{A}B\tilde{B}}$ and $\ket{Q'}_{A\tilde{A}\tilde{B}} := U_{A\tilde{A}}
\ket{Q}_{A\tilde{A}\tilde{B}}$.

Without loss of generality we can write
$$\ket{P'}_{\tilde{A}B\tilde{B}} = a \ket{0}_B \ket{P'_0}_{\tilde{A}\tilde{B}} + b \ket{1}_B
  \ket{P'_1}_{\tilde{A}\tilde{B}} \; \text{.}$$
Using that
\begin{align*}
\bra{\beta}_{AB} \big( a \ket{0}_A \ket{z_1}_C + b \ket{1}_A \ket{z_2}_C \big)
 &= \frac{1}{\sqrt{2}} \big( \bra{00}_{AB} + \bra{11}_{AB} \big)
 \big( a \ket{0}_A \ket{z_1}_C + b \ket{1}_A \ket{z_2}_C \big)\\
 &= \frac{1}{\sqrt{2}} \big( a \bra{0}_B \ket{z_1}_C + b \bra{1}_B \ket{z_2}_C \big)\; ,
\end{align*}
for arbitrary registers $A$, $B$ and $C$, and arbitrary quantum states $\ket{z_1}$ and $\ket{z_2}$, we get
\begin{align*}
\big| \, \braket{\psi^0}{\psi^1} \, \big|^2 &= \big| \, \bra{\beta}_{RA} \bra{P'}_{\tilde{A}B\tilde{B}}
  \ket{\beta}_{RB} \ket{Q'}_{A\tilde{A}\tilde{B}} \, \big|^2 \\ 
  &= \big| \, \bra{\beta}_{RA} \big(a^* \bra{0}_B \bra{P'_0}_{\tilde{A}\tilde{B}} + b^* \bra{1}_B
  \bra{P'_1}_{\tilde{A}\tilde{B}}\big)
  \ket{\beta}_{RB} \ket{Q'}_{A\tilde{A}\tilde{B}} \, \big|^2 \\
  &= \frac{1}{2} \big| \, \bra{\beta}_{RA} \big(a^* \ket{0}_R \bra{P'_0}_{\tilde{A}\tilde{B}} + b^* \ket{1}_R
  \bra{P'_1}_{\tilde{A}\tilde{B}}\big)
  \ket{Q'}_{A\tilde{A}\tilde{B}} \, \big|^2 \\
  &= \frac{1}{4} \big| \, \big(a^* \bra{0}_A \bra{P'_0}_{\tilde{A}\tilde{B}} + b^* \bra{1}_A
  \bra{P'_1}_{\tilde{A}\tilde{B}}\big)
  \ket{Q'}_{A\tilde{A}\tilde{B}} \, \big|^2\\
  & \leq \frac{1}{4} \; .
\end{align*}
In the last step we used that the magnitude of the inner product of two quantum states can never exceed 1.
\end{proof}

\subsection{Squeezing Many Vectors in a Small Space}\label{sec:squeeze}
The standard argument from~\cite[Section 4.5.4]{NC00} shows that the
number of unit vectors of pairwise distance $2\eps$ one can fit
into a $d$-dimensional space is of order $\frac{1}{\eps^d}$.

Note that two vectors with absolute inner product equal to
$1/2=\cos(\phi)$ have an angle of $\phi=\arccos(1/2) \approx 1.047$
between them. A small geometric calculation show that they are at
distance $2 \cos(\phi/2) \approx 1.732$ from each other. Hence, we can
consider an $\eps \approx 0.866$ in the statement above.

It follows that if we are trying to squeeze more than
$\frac{1}{\eps^d}$ in a $d$-dimensional space, there will be two
vectors that are closer than $2\eps$ and hence, their inner product is
larger than $1/2$.

\subsection{The Lower Bound}
\begin{theorem}
  Let $f$ be injective for Bob. Assume that Alice and Bob perform a
  perfect attack on protocol $\PVqubit$ where they communicate only
  classical information. Then, they need to pre-share an entangled
  state whose dimension is at least linear in $n$.
\end{theorem}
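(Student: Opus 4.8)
The plan is to combine the localized‑qubit bound of Lemma~\ref{lem:innerproduct} with a vector‑packing count in the style of Section~\ref{sec:squeeze}. Write $D$ for the dimension of each party's share of the pre‑shared state; the aim is $D=\Omega(n)$, equivalently that the adversaries hold $\Omega(\log n)$ qubits. First I would put the attack into a normal form. Since a perfect attack on $\PVqubit$ still succeeds with probability $1$ after conditioning on any value of the classical transcript of the round of communication, I fix the transcript and purify all measurements; the post‑communication joint state of the verifier's qubit $R$, Alice's registers $A\tilde A$ and Bob's registers $B\tilde B$ is then a (normalized) pure state $\ket{\Omega^{x,y}}$ obtained from $\ket{\beta}_{RP}\otimes\ket{\eta}$ by a local operation on Alice's side depending on $x$ and a local operation on Bob's side depending on $y$ (the classical channel, after conditioning, reduces to a local operation on each side — each party just projects onto the message it committed to sending). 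Because the verifiers' Bell test accepts with certainty, $f(x,y)=0$ forces $R$ to be maximally entangled with a qubit subregister of Alice, so $\ket{\Omega^{x,y}}$ is of the form~\eqref{eq:unitaryAlice}; symmetrically $f(x,y)=1$ makes it of the form~\eqref{eq:unitaryBob}. Lemma~\ref{lem:innerproduct} then yields, for every $x$ and every $y,y'$ with $f(x,y)=0$ and $f(x,y')=1$,
\[
\big|\braket{\Omega^{x,y}}{\Omega^{x,y'}}\big| \le \tfrac{1}{2}.
\]

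Second, I would invoke that $f$ is injective for Bob: for each $y\neq y'$ there is an $x$ with $f(x,y)\neq f(x,y')$, so the displayed estimate applies to that pair. The goal is to distil from the attack, for each $y\in\set{0,1}^n$, a single unit vector $\ket{\psi^y}$ in one fixed Hilbert space of dimension linear (or at worst polynomial) in $D$, recording Bob's local behaviour on input $y$ — for instance an encoding of Bob's local channel obtained by running it on halves of maximally entangled reference states. The key point is that if $\ket{\psi^y}$ and $\ket{\psi^{y'}}$ were close, then feeding Bob's behaviour for $y$ versus $y'$ into the rest of the attack at a distinguishing input $x$ would make $\ket{\Omega^{x,y}}$ and $\ket{\Omega^{x,y'}}$ close, contradicting the $\tfrac{1}{2}$‑bound above; hence the $2^n$ vectors $\set{\ket{\psi^y}}$ are pairwise at distance at least a fixed $2\eps>0$. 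Since at most $\eps^{-d}$ such vectors fit into a $d$‑dimensional space (Section~\ref{sec:squeeze}), we get $2^n\le \eps^{-O(D)}$, hence $D=\Omega(n)$ and so the adversaries need $\Omega(\log n)$ qubits.

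The main obstacle is the second step. The distinguishing input $x$ depends on the pair $(y,y')$, whereas the packing count needs all the $\ket{\psi^y}$ to inhabit one low‑dimensional ambient space and to be separated by a universal constant. Reconciling these requires (i) a without‑loss‑of‑generality reduction of the adversaries' effective working space — in particular the length of the classical messages — to size polynomial in $D$, so that ``Bob's behaviour on input $y$'' is genuinely a vector of the right dimension, and (ii) a choice of the encoding $y\mapsto\ket{\psi^y}$ for which a single distinguishing coordinate still certifies a constant separation rather than being diluted by the behaviour on non‑distinguishing inputs — which is also where some robustness of the injectivity (beyond mere non‑constancy of $f(\cdot,y)$ against $f(\cdot,y')$) is convenient, and where the linear‑versus‑polynomial bookkeeping lives. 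Granting this, the extractability normal form, the invocation of Lemma~\ref{lem:innerproduct}, and the final packing count are routine.
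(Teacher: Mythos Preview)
Your two ingredients --- Lemma~\ref{lem:innerproduct} and the packing bound of Section~\ref{sec:squeeze} --- are exactly those the paper uses, and you have correctly located the only real difficulty: the distinguishing input $x$ depends on the pair $(y,y')$, so one cannot simply pack the post-communication states $\ket{\Omega^{x,y}}$ at a single fixed $x$.

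The paper resolves this obstacle far more directly than the Choi-type encoding you sketch, and this is the idea your proposal is missing. Instead of states after \emph{both} parties have acted, it packs the global pure states $\ket{\psi^{y,t}}_{RA\tilde{A}B\tilde{B}}$ obtained after Bob has applied his $y$-dependent measurement (with outcome $t$) but \emph{before} Alice touches anything. These vectors already sit in the full Hilbert space of dimension $d = 2^{3+q_A+q_B}$, so no separate ``encoding of Bob's channel'' is needed, no bound on the classical message alphabet is required, and your worries (i) and (ii) simply do not arise: the $\ket{\psi^{y,t}}$ do not see $x$ at all.

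The bridge back to Lemma~\ref{lem:innerproduct} is then a one-line monotonicity step. Writing Alice's $x$-dependent instrument with Kraus operators $\{M_s\}$ and letting $p_s,q_s$ be the outcome probabilities on the two states, one has for every $x$
\[
\big|\braket{\psi^{y,t}}{\psi^{y',t'}}\big|
\;=\;\Big|\sum_s \bra{\psi^{y,t}}M_s^\dagger M_s\ket{\psi^{y',t'}}\Big|
\;\le\;\sum_s \sqrt{p_s q_s}\,\big|\braket{\psi^{x,s,y,t}}{\psi^{x,s,y',t'}}\big|
\;\le\;\max_s \big|\braket{\psi^{x,s,y,t}}{\psi^{x,s,y',t'}}\big|.
\]
Hence if a packing collision gives $\big|\braket{\psi^{y,t}}{\psi^{y',t'}}\big|>1/2$, then for the $x$ witnessing $f(x,y)\neq f(x,y')$ there is some outcome $s$ with post-Alice inner product still exceeding $1/2$, contradicting Lemma~\ref{lem:innerproduct}. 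The distinguishing $x$ is invoked only \emph{after} the collision has been found, which is precisely what dissolves your obstacle.
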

\begin{proof}
  Let $\ket{\psi}_{RA\tilde{A}B\tilde{B}}$ be the pure state after
  Alice received the EPR half from the verifier. The one-qubit
  register $R$ holds the verifier's half of the EPR-pair, the
  one-qubit register $A$ contains Alice's other half of the EPR-pair,
  the $q_A$-qubit register $\tilde{A}$ is Alice's part of the
  pre-shared entangled state. The registers $B\tilde{B}$ belong to Bob
  where $B$ holds one qubit and $\tilde{B}$ holds $q_B$ qubits. Hence,
  the overall state is a unit vector in a complex Hilbert space of
  dimension $d := 2^{2+q_A+1+q_B}$.

  In the first step of their attack, Alice performs an arbitrary
  quantum operation depending on her classical input $x$ on her
  registers $A\tilde{A}$ resulting in a classical outcome
  $s \in \mathcal{S}$. Similarly, Bob performs a quantum operation depending on $y$ on
  registers $B\tilde{B}$ resulting in classical outcome $t \in \mathcal{T}$. As we
  restricted the players to classical communication, we can assume
  without loss of generality that their operation is a
  measurement.

  We investigate the set $\mathcal{B}$ of overall states after Bob performed his
  measurement, but \emph{before} Alice acts on the state. These states
  depend on Bob's input $y \in \set{0,1}^n$ and his measurement
  outcome $t\in\mathcal{T}$,
\begin{align*}
\mathcal{B} := \Set{ \ket{\psi^{y,t}}_{RA\tilde{A}B\tilde{B}} }{y \in
  \set{0,1}^n,t \in\mathcal{T}} \, .
\end{align*}
The set $\mathcal{B}$ contains at least $2^n$ unit vectors of
dimension $d$. We assume for a contradiction that the dimension $d$ is
smaller than linear in $n$. By the results of
Section~\ref{sec:squeeze}, we know that for $\eps \approx 0.866$, $2^n
> \left(\frac{1}{\eps}\right)^d$ implies that there are two different unit vectors
in $\mathcal{B}$, say $\ket{\psi^{y,t}}$ and $\ket{\psi^{y',t'}}$,
whose absolute inner product is larger than $1/2$.

We now let Alice act on her registers $A\tilde{A}$ of the state. Note
that for every input $x \in \set{0,1}^n$, performing the same action
(depending on $x \in \set{0,1}^n$) with the same outcome $s \in
\mathcal{S}$ on the two states $\ket{\psi^{y,t}}$ and
$\ket{\psi^{y',t'}}$ does not decrease their absolute inner product. Let us call
the states after Alice's actions 
$\ket{\psi^{x,s,y,t}}_{RA\tilde{A}B\tilde{B}}$ and
$\ket{\psi^{x,s,y',t'}}_{RA\tilde{A}B\tilde{B}}$. We have just shown
that for all $x \in \set{0,1}^n, s \in \mathcal{S}$,
\begin{align}\label{eq:largeip}
\big| \braket{\psi^{x,s,y,t}}{\psi^{x,s,y',t'}} \big| > 1/2 \, .
\end{align}

However, because $f$ is injective for Bob, there exists $x$ such that
$f(x,y) \neq f(x,y')$ and hence, the qubit has to end up at different
places depending on Bob's input. For such an $x$ (and arbitrary $s \in
\mathcal{S}$, Lemma~\ref{lem:innerproduct} requires that the states
are ``different'', namely that the absolute inner product $\big|
\braket{\psi^{x,s,y,t}}{\psi^{x,s,y',t'}} \big|$ needs to be smaller
than $1/2$, contradicting~\eqref{eq:largeip}.

Hence, the dimension $d$ of the overall state needs to be at least
linear in $n$.
\end{proof}

\chapter{Open Questions}
In this thesis, we defined the garden-hose model and gave first results for the analysis of a specific scheme
for quantum position-based cryptography. This scheme only requires the honest prover to work with a single qubit,
while the dishonest provers potentially have to manipulate a large quantum state, making it an appealing
scheme to further examine. The garden-hose model captures the power of attacks that only use teleportation,
giving upper bounds for the general scheme, and lower bounds when restricted to these attacks.

The garden-hose model is a new model of communication complexity, and there are still
open questions in relation to this model. Can we find better upper and lower bounds for
the garden-hose complexity of the studied functions?
Our constructions still leave a polynomial gap between lower and upper bounds for
many functions, such as the majority function described in Section \ref{sec:majority}.
It would also be interesting to find an explicit function for which the garden-hose
complexity is provably large, the counting argument in Proposition~\ref{prop:expbound} only
shows the existence of such functions.

Another relevant extension to our results would be the examination of the randomized case:
If we allow Alice and Bob to give the wrong answer with small probability, what are
the lower and upper bounds we can prove in the garden-hose model? For example,
assuming shared randomness between Alice and Bob, we can use results from communication complexity
to show a large gap between the randomized garden-hose complexity of the equality function, 
and the deterministic garden-hose complexity of equality, which we examined in this thesis.

A possible interesting extra restriction on the garden-hose model
would involve limiting the computational power of Alice and Bob.
For example to polynomial time, or the output of quantum circuits of polynomial size.
By bounding not only the amount of entanglement,
but also the amount of computation with a realistic limit, perhaps stronger security proofs are possible.

We can also see multiple interesting open questions when we include the quantum aspects of the problem.
First we have the relation between the garden-hose complexity and the entanglement actually needed
to break the position-verification scheme. Are there quantum attacks on our protocol
for position verification that need asymptotically less entanglement than the garden-hose complexity?
Here it would also be interesting to look at the randomized case. Can we prove lower
bounds, and better upper bounds, if we allow the dishonest provers to make a small error?
The garden-hose lower bounds and quantum lower bounds,
given in this thesis in Section~\ref{sec:ghlowerbounds} and Section~\ref{sec:lowerbound} respectively,
have an exponential gap between them.
Reducing this gap would give more insight into the relative power of all possible
quantum actions to only teleportation, where the garden-hose game captures the power of
attack strategies that just use teleportation.

As a final question, we can ask: How does the protocol behave under parallel repetition?
When executing the protocol once, the dishonest provers 
always have a large probability of cheating the verifiers; even the na\"ive method of measuring the qubit
and distributing the result will work with a probability of at least $0.75$.
By using the protocol multiple times in parallel, given a situation
where the adversaries have a small error,
it might be possible to increase the probability that the dishonest provers are caught 
to arbitrarily close to 1. However, from complexity theory
we know similar situations where provers can achieve
a lower error probability than expected on first sight. In our setting, it remains to be proven
that we can always amplify the probability of the cheaters getting caught.

\appendix

\bibliographystyle{alpha}
\phantomsection\addcontentsline{toc}{chapter}{\bibname}
\bibliography{library}

\newcommand{\etalchar}[1]{$^{#1}$}
\begin{thebibliography}{CGMO09}

\bibitem[Bar89]{Barrington1989}
David~A. Barrington.
\newblock {Bounded-width polynomial-size branching programs recognize exactly
  those languages in NC1}.
\newblock {\em Journal of Computer and System Sciences}, 164:150--164, 1989.

\bibitem[BB84]{Bennett1984}
Charles~H. Bennett and Gilles Brassard.
\newblock {Quantum cryptography: Public key distribution and coin tossing}.
\newblock In {\em Proceedings of IEEE International Conference on Computers
  Systems and Signal Processing}, volume 175, pages 175--179. Bangalore, India,
  1984.

\bibitem[BBRV02]{Bandyopadhyay2002}
Somshubhro Bandyopadhyay, P.~Oscar Boykin, Vwani Roychowdhury, and Farrokh
  Vatan.
\newblock {A new proof for the existence of mutually unbiased bases}.
\newblock {\em Algorithmica}, 34(4):512--528, March 2002.

\bibitem[BC94]{BC93}
Stefan Brands and David Chaum.
\newblock Distance-bounding protocols.
\newblock In {\em EUROCRYPT'93}, pages 344--359. Springer, 1994.

\bibitem[BCF{\etalchar{+}}10]{Buhrman2010}
Harry Buhrman, Nishanth Chandran, Serge Fehr, Ran Gelles, Vipul Goyal, Rafail
  Ostrovsky, and Christian Schaffner.
\newblock {Position-Based Quantum Cryptography: Impossibility and
  Constructions}.
\newblock {\em arXiv:1009.2490v3}, page~24, September 2010.

\bibitem[BFSS11]{BFSS11}
Harry Buhrman, Serge Fehr, Christian Schaffner, and Florian Speeelman.
\newblock The garden-hose game and application to position-based quantum
  cryptography.
\newblock September 2011.
\newblock To be presented at QCRYPT 2011.

\bibitem[BK11]{Beigi2011}
Salman Beigi and Robert K\"{o}nig.
\newblock {Simplified instantaneous non-local quantum computation with
  applications to position-based cryptography}.
\newblock {\em arXiv:1101.1065v1}, page~18, January 2011.

\bibitem[Bus04]{B04}
Laurent Bussard.
\newblock {\em Trust Establishment Protocols for Communicating Devices}.
\newblock PhD thesis, Eurecom-ENST, 2004.

\bibitem[CCS06]{CCS06}
Srdjan Capkun, Mario Cagalj, and Mani Srivastava.
\newblock Secure localization with hidden and mobile base stations.
\newblock In {\em IEEE INFOCOM}, 2006.

\bibitem[CGMO09]{Chandran2009}
Nishanth Chandran, Vipul Goyal, Ryan Moriarty, and Rafail Ostrovsky.
\newblock {Position Based Cryptography}.
\newblock {\em Proceedings of the 29th Annual International Cryptology
  Conference on Advances in Cryptology}, page 407, 2009.

\bibitem[CH05]{CH05}
Srdjan Capkun and Jean-Pierre Hubaux.
\newblock Secure positioning of wireless devices with application to sensor
  networks.
\newblock In {\em IEEE INFOCOM}, pages 1917--1928, 2005.

\bibitem[DdW11]{DdW11}
Andrew Drucker and Ronald de~Wolf.
\newblock {\em Quantum Proofs for Classical Theorems}.
\newblock Number~2 in Graduate Surveys. Theory of Computing Library, 2011.

\bibitem[Fey82]{Feynman1982}
Richard~P. Feynman.
\newblock {Simulating physics with computers}.
\newblock {\em International Journal of Theoretical Physics}, 21(6-7):467--488,
  June 1982.

\bibitem[KMS11]{Kent2010}
Adrian Kent, William Munro, and Timothy Spiller.
\newblock {Quantum tagging: Authenticating location via quantum information and
  relativistic signaling constraints}.
\newblock {\em Physical Review A}, 84(1), July 2011.

\bibitem[LBZ02]{Lawrence2002}
Jay Lawrence, \v{C}aslav Brukner, and Anton Zeilinger.
\newblock {Mutually unbiased binary observable sets on N qubits}.
\newblock {\em Physical Review A}, 65(3):1--5, February 2002.

\bibitem[LMT97]{Lange1998}
K.-J. Lange, Pierre McKenzie, and Alain Tapp.
\newblock {Reversible space equals deterministic space}.
\newblock In {\em Proceedings of Computational Complexity. Twelfth Annual IEEE
  Conference}, pages 45--50. IEEE Comput. Soc, April 1997.

\bibitem[NC00]{NC00}
Michael~A. Nielsen and Isaac~L. Chuang.
\newblock {\em Quantum Computation and Quantum Information}.
\newblock Cambridge university press, 2000.

\bibitem[Sho99]{Shor1999}
Peter~W. Shor.
\newblock {Polynomial-Time Algorithms for Prime Factorization and Discrete
  Logarithms on a Quantum Computer}.
\newblock {\em SIAM Review}, 41(2):303, 1999.

\bibitem[SP05]{SP05}
Dave Singelee and Bart Preneel.
\newblock Location verification using secure distance bounding protocols.
\newblock In {\em IEEE MASS'10}, 2005.

\bibitem[SSW03]{SSW}
Naveen Sastry, Umesh Shankar, and David Wagner.
\newblock Secure verification of location claims.
\newblock In {\em WiSe'03}, pages 1--10, 2003.

\bibitem[VN04]{VN04}
Adnan Vora and Mikhail Nesterenko.
\newblock Secure location verification using radio broadcast.
\newblock In {\em OPODIS'04}, pages 369--383, 2004.

\bibitem[ZLFW06]{ZLFW06}
Yanchao Zhang, Wei Liu, Yuguang Fang, and Dapeng Wu.
\newblock Secure localization and authentication in ultra-wideband sensor
  networks.
\newblock {\em IEEE Journal on Selected Areas in Communications}, 24:829--835,
  2006.

\end{thebibliography}

\end{document}